\documentclass[12pt]{article}

\usepackage[a4paper,includeheadfoot,margin=2.54cm]{geometry}
\usepackage[latin2]{inputenc}
\usepackage[english]{babel}
\usepackage{amsthm}
\usepackage{amsmath}
\usepackage{amssymb}
\usepackage[hyphens]{url}

\theoremstyle{plain}
\newtheorem{theorem}{Theorem}
\newtheorem{lemma}{Lemma}
\newtheorem{proposition}[lemma]{Proposition}
\newtheorem{corollary}[lemma]{Corollary}

\newtheorem{example}{Example}[section]
\theoremstyle{definition}

\newtheorem{remark}{Remark}

\newcommand{\F}{\mathbb{F}}

\newcommand{\Z}{\mathbb{Z}}

\newcommand{\T}{{\rm Tr}}

\begin{document}
\begin{center}
\title[A note on the hull and linear complementary pair of cyclic codes \\
\author [Zohreh Aliabadi,  Tekgül Kalayc\i\\

Sabanc\i~ University, MDBF, Orhanl\i, 34956 Tuzla, Istanbul, Turkey\\
E-mail: \{zaliabadi, tekgulkalayci\}@sabanciuniv.edu\\
\end{center}
\begin{abstract}
The Euclidean hull of a linear code $C$ is defined as $C\cap C^{\perp}$, where $C^\perp$ denotes the dual of $C$ under the Euclidean inner product. A linear code with zero hull dimension is called a linear complementary dual (LCD) code.  A pair $(C,D)$ of linear codes of length $n$ over $\F_q$ is called a linear complementary pair (LCP)  of codes if $C\oplus D=\F_q^n$.  In this paper, we give a characterization of  LCD and LCP of cyclic codes of length $q^m-1$, $m \geq 1$, over the finite field $\F_q$ in terms of their basic dual zeros and their trace representations. We also formulate the hull dimension of a cyclic code of  arbitrary length  over $\F_q$ with respect to its basic dual zero. Moreover, we provide a  general formula for the dimension of the intersection of two cyclic codes of arbitrary length over $\F_q$ based on their basic dual zeros.
\end{abstract}
\noindent {\bf Keywords} Cyclic codes, hull of  linear codes, linear complementary dual codes, linear complementary pair of codes, trace representation, basic dual zero\\[.5em]
{\bf Mathematics Subject Classification} 94B15, 11T71
\section{Introduction}
\label{sec1}
The Euclidean hull of a linear code $C$ over the finite field $\F_q$ is defined as the intersection of $C$ with its dual, i.e.,
$$\hbox{Hull}(C):= C\cap C^{\perp},$$
where $C^{\perp}$ is the Euclidean dual of $C$. Obviously, $\hbox{Hull}(C)$ is also a linear code over $\F_q$. We denote the dimension of $\hbox{Hull}(C)$ by $h(C)$.

The concept of the hull has been introduced by Assumus and Key in \cite{AFFINE} in order to classify finite projective planes. The hull of a linear code has  applications in classical linear codes and quantum error-correction codes, see \cite{COMAUT}, \cite{PERBET}, \cite{ONCOMAUT}, \cite{1DIMAG}. It turns out that the algorithms for determining permutation equivalence between codes, and determining the automorphism group of a linear code are more effective when the size of the hull dimension of the code is small.

A zero-dimensional hull linear code is called linear complementary dual (LCD) code, which is introduced by Massey in \cite{LCD}. If $C$ is an LCD code of length $n$ over $\F_q$, then $C\oplus C^{\perp}=\F_q^n$.  More generally, a pair $(C,D)$ of linear codes of length $n$ over $\F_q$ is called a linear complementary pair (LCP) of codes if $C\oplus D=\F_q^n$. Clearly, if $C$ is an LCD code, then the pair $(C,C^{\perp})$ is LCP.

The study of LCD and LCP of codes  has a cryptographic motivation. It has been shown that certain cryptosystems, which are defined via linear codes, are more secure against side channel attacks (SCA) and fault-injection attacks (FIA) when LCD or LCP of codes are used in their constructions, see \cite{LCDSTRE}, \cite{SCA}, \cite{SCAFIA}.

Due to the above-mentioned applications, codes with small hull dimension (especially one-dimensional hull codes) are studied in the recent literature, see \cite{SEMI}, \cite{onedimhull}, \cite{1DIMAG} and references therein.

Cyclic codes, their hull dimensions, LCD and LCP classes of cyclic codes are studied in the literature. The characterization of LCD and LCP of cyclic codes in terms of their generator polynomials has been given in \cite{LCP} and \cite{LCDCYCLIC}, respectively. The hull of cyclic codes in terms of their generator polynomials has been formulated in \cite{CHULL}. The class of one-dimensional hull cyclic codes has been studied in \cite{onedimhull}, where the authors used the defining set of  a cyclic code to obtain their results. They have also shown that there exist no binary or ternary one-dimensional hull cyclic codes.

This paper presents results on the hull LCD and LCP classes of cyclic codes  with respect to their basic dual zero sets, and organized as follows. In Section 1, we  recall the basic definitions and results on the cyclic codes and polynomials over finite fields. In Sections 3 and 4, LCD and one-dimensional hull cyclic codes of length $q^m-1$ over $\F_q$ are studied, respectively. Moreover, the hull of a cyclic code of arbitrary length $n$ over $\F_q$ is formulated. In Section 5, we study  LCP of cyclic codes of length $q^m-1$ over $\F_q$. Furthermore, a general formula for the $\ell$-intersection pair $(C,D)$ of cyclic codes (i.e., $\dim(C\cap D)=\ell$) of arbitrary length over $\F_q$ is provided.
\section{Preliminaries}\label{PRE}
In this section, we recall basic properties of cyclic codes and polynomials over finite fields.
\subsection{Cyclic Codes}
Throughout the paper, $\F_q$ denotes the finite field of $q$ elements, $q$ is a prime power and $n$ is a positive integer such that  $\gcd(n, q)=1$, where $\gcd(n, q)$ denotes the greatest common divisor of $n$ and $q$.   A linear code over $\F_q$ of length $n$ and dimension $k$ is a $k$-dimensional subspace of $\F_q^n$, and a codeword is an element of the linear code.  For a linear code $C$ of length $n$, the (Euclidean) dual of $C$, which is denoted by $C^{\perp}$,  is defined as 
$$C^{\perp}:=\lbrace x\in \F_q^n \; | \; <c,x>=\sum_{i=0}^{n-1} c_ix_i=0 \; \text{ for all } \; c\in C \rbrace.$$
A linear code over $\F_q$ of length $n$ is called cyclic, if any cyclic shift of a codeword is again a codeword, i.e., $(c_0,\ldots ,c_{n-1})\in C$ implies $(c_{n-1},c_0,\ldots ,c_{n-2})\in C$. Clearly, the dual of a cyclic code is also cyclic.

Let $C$ be a linear code of length $n$ over $\F_q$ and $c$ be a codeword.  If $c=(c_0,\ldots ,c_{n-1})$ is identified with the polynomial $c(x)=\sum_{i=0}^{n-1}c_ix^i$, then the code $C$ can be seen as a subset of the ring $\F_{q}[x]/(x^n - 1)$. If $C$ is a cyclic code, then any cyclic shift of a codeword is also a codeword, i.e.,   the set $\{c(x) \ | \ c \in C\}$ is an ideal of $\F_{q}[x]/(x^n - 1)$.  Since $\F_{q}[x]/(x^n - 1)$ is a principal ideal domain,  any ideal of $\F_{q}[x]/(x^n - 1)$ has a unique monic generator. 

We recall that for a polynomial $f(x) \in \F_q[x]$ with $f(0) \neq 0$, the polynomial $f^*(x)=\frac{1}{f(0)}x^{\deg f(x)}f(\frac{1}{x})$ is called the reciprocal polynomial of $f(x)$, where $\deg{f(x)}$ denotes the degree of $f(x)$.  A polynomial $f(x)$ is called self-reciprocal if $f(x)=f^*(x)$. 
 Let $g(x)$ be the generator polynomial of a cyclic code $C$, i.e., $C=<g(x)>$.  Then $C^\perp=<h^*(x)>$, where $h(x)=\frac{x^n-1}{g(x)}$ and $h^{\ast}(x)$  is the reciprocal polynomial of $h(x)$. The polynomial  $h(x)$ is called the parity check polynomial of $C$.

Since the generator polynomial of a cyclic code of length $n$ over $\F_q$ is a factor of $x^n - 1$, we recall the factorization of $x^n - 1 $ into monic irreducible polynomials over $\F_q$. 

\subsection{Factorization of $x^n-1$}
We recall that  $n$ and $q$ are relatively prime. Let $a$ be a positive integer. Then the  $q$-cyclotomic coset $B_a$ of $a$ modulo $n$ is defined as follows:
\begin{align*}
	B_a:= \lbrace a,aq, \ldots , aq^{\delta_a-1} \rbrace,
	\end{align*}
where $\delta_a$ is the smallest positive integer such that
$aq^{\delta_a}\equiv a \quad (\hbox{mod } n).$ Note that the number $\delta_a$ is the cardinality of $B_a$, and it is denoted by $|B_a|$. Clearly, for two positive integers $a_1,  a_2 $,  either $B_{a_1}=B_{a_2}$ or $B_{a_1}\cap B_{a_2} =\emptyset$. 
 Let $B(n,q)$ be the set of all the $q$-cyclotomic coset leaders modulo $n$. Then ${\cup}_{a \in B(n,q)} B_a =\Z_n.$ That is, the  set of $q$-cyclotomic cosets modulo $n$ forms a  partition of  $\Z_n$.\\
Let $\alpha$ be a primitive $n$-th root of unity over $\F_q$. Then the minimal polynomial $m_{\alpha^i}(x)$ of $\alpha^i$ over $\F_q$ is 
$$m_{\alpha^i}(x)=\prod_{s \in B_i} (x-\alpha^s).$$
By using the above notation, the factorization of $x^n - 1$ into  monic  irreducible factors over $\F_q$ can be given as below: 
\begin{equation}\label{factor}
	x^n-1=\prod_{i\in B(n,q)} m_{\alpha^i}(x).
\end{equation}
\begin{lemma}\label{SELF}
	Let $\alpha$ be a primitive $n$-th root of unity over $\F_q$. Then $m_{\alpha^i}(x) $ is self-reciprocal if and only if $B_i=B_{-i}$.
\end{lemma}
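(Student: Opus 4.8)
The plan is to compute the reciprocal polynomial of $m_{\alpha^i}(x)$ directly from the product formula $m_{\alpha^i}(x)=\prod_{s\in B_i}(x-\alpha^s)$ and then to recognize the result as another minimal polynomial. First I would evaluate $m_{\alpha^i}(0)=(-1)^{|B_i|}\prod_{s\in B_i}\alpha^s$ and $x^{|B_i|}m_{\alpha^i}(1/x)=\prod_{s\in B_i}(1-\alpha^s x)=(-1)^{|B_i|}\big(\prod_{s\in B_i}\alpha^s\big)\prod_{s\in B_i}(x-\alpha^{-s})$. Dividing the latter by $m_{\alpha^i}(0)$, as prescribed by the definition of $f^*$, the scalar $(-1)^{|B_i|}\prod_{s\in B_i}\alpha^s$ cancels, leaving $m_{\alpha^i}^*(x)=\prod_{s\in B_i}(x-\alpha^{-s})$. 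The only thing to watch here is the bookkeeping of the constant $1/f(0)$ and of the factors $-\alpha^s$ pulled out of each term.

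Next I would identify the exponent set $\{-s \bmod n : s\in B_i\}$. Since $B_i=\{i,iq,\ldots,iq^{\delta_i-1}\}$, negating every element modulo $n$ yields $\{-i,-iq,\ldots,-iq^{\delta_i-1}\}$, which is closed under multiplication by $q$ and contains $-i$, hence is precisely the $q$-cyclotomic coset $B_{-i}$. Because $\alpha$ has order exactly $n$, the map $s\mapsto\alpha^s$ is a bijection between $\Z_n$ and the $n$-th roots of unity, so $m_{\alpha^i}^*(x)=\prod_{t\in B_{-i}}(x-\alpha^t)=m_{\alpha^{-i}}(x)$.

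Finally, $m_{\alpha^i}(x)$ is self-reciprocal if and only if $m_{\alpha^i}(x)=m_{\alpha^{-i}}(x)$. Both are monic irreducible factors of $x^n-1$ occurring in the factorization \eqref{factor}, so they are either equal or coprime; they coincide if and only if they share a root, i.e. $\{\alpha^s:s\in B_i\}\cap\{\alpha^t:t\in B_{-i}\}\neq\emptyset$, which by the bijectivity of $s\mapsto\alpha^s$ and the fact that the $q$-cyclotomic cosets partition $\Z_n$ is equivalent to $B_i=B_{-i}$. This finishes the proof. I do not anticipate a real obstacle: the only delicate points are the cancellation of the leading scalar in the first step and the identification of $\{-s \bmod n: s \in B_i\}$ with the cyclotomic coset $B_{-i}$.
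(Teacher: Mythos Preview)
Your argument is correct and follows essentially the same route as the paper: both hinge on the identity $m_{\alpha^i}^{\ast}(x)=m_{\alpha^{-i}}(x)$ and then use that two monic irreducible factors of $x^n-1$ coincide precisely when their root sets (equivalently, their cyclotomic cosets) agree. The only difference is presentational---you derive $m_{\alpha^i}^{\ast}(x)=\prod_{s\in B_i}(x-\alpha^{-s})$ by an explicit computation with the product formula and the normalizing constant $1/f(0)$, whereas the paper tacitly invokes the standard fact that the roots of $f^{\ast}$ are the inverses of the roots of $f$ and then uses a divisibility argument for the converse.
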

\begin{proof}
	Suppose that $m_{\alpha^i}(x) \in \F_q[x]$ is  self-reciprocal. That is,  $m_{\alpha^i}(x)=m_{\alpha^i}^{\ast}(x)$. This implies that  the set of roots of $m_{\alpha^i}^{\ast}(x)$ is equal to the set of roots of $m_{\alpha^i}(x)$, i.e., $B_i=B_{-i}$.
	\newline
Conversely, assume that $B_i=B_{-i}$. This means that $\alpha^{-i}$ is a root of $m_{\alpha^i}(x)$, hence the minimal polynomial $m_{\alpha^i}^{\ast}(x)$ of $\alpha^{-i}$  divides $m_{\alpha^i}(x)$. Since 
$m_{\alpha^i}^{\ast}(x)$ and $m_{\alpha^i}(x)$ are both monic and irreducible, we obtain that $m_{\alpha^i}^{\ast}(x) =m_{\alpha^i}(x)$, 	 Hence, the polynomial $m_{\alpha^i}(x)$ is self-reciprocal.
\end{proof}
 	Let $\lbrace i_1, \ldots, i_t \rbrace$ be the set of all  $q$-cyclotomic coset leaders modulo $n$, and $T=\lbrace \alpha^{i_j} \; | \; 1\leq j \leq t\rbrace$.  Suppose  $T_1, T_2\subseteq T$  such that $T_1=\lbrace \alpha^{i_j} \; | \; B_j=B_{-j} \rbrace$ and $T_2=T \setminus T_1$. By Lemma \ref{SELF}, for any $\alpha^{i_j} \in T_1$, $m_{\alpha^{i_j}}(x)$ is self-reciprocal. Thus, Equation \eqref{factor} can be rewritten as follows:
 	\begin{align}\label{rewritten}
x^n-1=\prod_{\alpha^{i_j} \in T_1} m_{\alpha^{i_j}(x)} \prod_{\alpha^{i_j}\in T_2} m_{\alpha^{i_j}}(x)m_{\alpha^{i_j}}^{\ast}(x).
\end{align}
\subsection{Trace representation of cyclic codes}
In this subsection,  we consider the case $n=q^m-1$ for some positive integer $m$. Let $C$ be a cyclic code of length $n$ over $\mathbb{F}_q$ with the generator polynomial $g(x)$. Let $\alpha$ be a primitive $n$-th root of unity over $\F_q$ and  $\lbrace {i_1}, \ldots , {i_t} \rbrace$ be  the set of all  $q$-cyclotomic coset leaders modulo $n$. 
Suppose that $h(x)$ is the parity check polynomial of $C$ and  $S\subseteq \lbrace 1, \ldots, t \rbrace$ such that $h^{\ast}(x)= \prod_{j \in S} m_{\alpha^{i_j}}(x)$.
Then the basic dual zero of $C$ is defined as
$$\hbox{BZ}(C^{\perp})=\lbrace \alpha^{i_j} \; | \; j\in S \rbrace.$$
The following theorem gives a trace representation of a cyclic code $C$ of length $q^m - 1$, where $\T_{q^mq^k}$ denotes the relative trace map from $\F_{q^m}$ to $\F_{q^k}$, for a divisor  $k$ of  $ m$.  
\begin{proposition}\cite[Proposition 2.1]{tracerep}\label{tracerepr}
	Let $\alpha$ be a primitive $n$-th root of unity with $n=q^m - 1$.
	 Suppose that $C$ is a cyclic code, where the generator polynomial of $C^\perp$  is equal to $\prod_{j \in S}m_{\alpha^{i_j} }(x)$, i.e.,  $\hbox{BZ}(C^{\perp})=\lbrace \alpha^{i_j} \; | \; j\in S \rbrace$. Then
	$$C=\Bigl\{\Bigl(\sum_{j\in S} \T_{q^mq}(\lambda_j x^{i_j} )\Bigr)_{x\in \F_{q^m}^{\ast}} \; \big | \; \lambda_j \in \F_{q^m} \Bigr\}.$$
\end{proposition}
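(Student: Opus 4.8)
The plan is to build the trace representation directly from the structure of the dual code and its parity check polynomial. First I would recall that since $h(x)$ is the parity check polynomial of $C$, the dual code $C^{\perp}$ has generator polynomial $h^{\ast}(x) = \prod_{j\in S} m_{\alpha^{i_j}}(x)$, so $C^{\perp}$ is a cyclic code whose set of nonzeros (the zeros of the parity check polynomial of $C^{\perp}$, equivalently the zeros of $h^\ast$) is exactly $\bigcup_{j\in S} B_{i_j}$. Dually, $C = (C^{\perp})^{\perp}$, and a clean way to describe $C$ as a set of codewords is via the standard fact that for $n = q^m-1$, any cyclic code can be described by summing trace functions over the cyclotomic cosets determined by its \emph{nonzeros}; here the nonzeros of $C$ are the complement, but the point of the proposition is that the $i_j$ with $j\in S$ — the exponents coming from $\mathrm{BZ}(C^{\perp})$ — are precisely the ones that parametrize the codewords of $C$.

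The key computational step is a duality/orthogonality argument. Identify $\F_q^n$ with functions on $\F_{q^m}^{\ast}$ via the primitive root $\alpha$: a vector $c = (c_0,\dots,c_{n-1})$ corresponds to the function $x \mapsto c_{\log_\alpha x}$. With this identification, I would show that each candidate word $w_{\underline{\lambda}}(x) = \sum_{j\in S}\T_{q^mq}(\lambda_j x^{i_j})$ is $\F_q$-valued (immediate, since the trace from $\F_{q^m}$ to $\F_q$ lands in $\F_q$) and lies in $C$ by verifying it is orthogonal to every codeword of $C^{\perp}$. A generator-style spanning set for $C^{\perp}$ comes from its generator polynomial; the inner product of $w_{\underline{\lambda}}$ with such a word reduces, after expanding the trace as a sum of Frobenius conjugates and using the geometric-sum identity $\sum_{x\in\F_{q^m}^{\ast}} x^{k} = -1$ if $n \mid k$ and $0$ otherwise, to a sum of evaluations $h^{\ast}(\alpha^{s})$ or $h^{\ast}$-multiples at the relevant powers. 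Because the exponents $i_j$ run over cosets that are \emph{not} roots of $h^{\ast}$ being multiplied against roots of $h^{\ast}$, all these evaluations vanish, giving orthogonality. This shows the right-hand side is contained in $C$.

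For the reverse inclusion it suffices to count dimensions: the right-hand side is an $\F_q$-linear image of $\bigoplus_{j\in S}\F_{q^m}$, and I would argue the map $\underline{\lambda}\mapsto w_{\underline{\lambda}}$ has $\F_q$-kernel of the expected size. The standard fact is that $x\mapsto \T_{q^mq}(\lambda x^{i})$, as $\lambda$ ranges over $\F_{q^m}$, produces an $\F_q$-space of dimension exactly $|B_i|$ (the size of the cyclotomic coset, equivalently $\deg m_{\alpha^i}$), and contributions from distinct cosets $B_{i_j}$, $j\in S$, are independent because they correspond to distinct irreducible constituents. Hence the dimension of the right-hand side is $\sum_{j\in S}\deg m_{\alpha^{i_j}} = \deg h^{\ast}(x) = \dim C$, and combined with the inclusion this forces equality.

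The main obstacle I expect is the bookkeeping in the orthogonality computation: correctly tracking which cyclotomic coset each Frobenius conjugate $x^{i_j q^\ell}$ lands in, matching it against the exponents appearing in a generator of $C^{\perp}$, and confirming that the resulting power sums over $\F_{q^m}^{\ast}$ genuinely vanish — i.e., that no unwanted coincidence $i_j q^\ell + (\text{exponent from } h^\ast) \equiv 0 \pmod n$ occurs. This is exactly where the hypothesis that $S$ indexes the constituents of $h^{\ast}$ (and not of $h$) is used, and it is the crux that makes the two sets of exponents disjoint modulo the cyclotomic action. Alternatively, one can bypass part of this by invoking \cite[Proposition 2.1]{tracerep} more directly: establish that $C^{\perp} = \langle h^{\ast}(x)\rangle$ has the claimed nonzero set, apply the known trace description of a cyclic code in terms of the cyclotomic cosets of its nonzeros to $C$ itself, and then identify those cosets with $\{B_{i_j} : j\in S\}$ via the relation $h^\ast(x) = \prod_{j\in S} m_{\alpha^{i_j}}(x)$; this reduces the proposition to a translation of notation, which is likely the intended short argument.
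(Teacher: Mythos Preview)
The paper does not supply a proof of this proposition at all: it is quoted verbatim as \cite[Proposition 2.1]{tracerep} (Wolfmann) and used as a black box throughout. So there is no ``paper's own proof'' to compare against; your final paragraph already anticipates this, noting that the intended argument is essentially a translation of notation from the cited source.

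Your sketch is the standard Delsarte-type argument (inclusion via orthogonality, then equality by a dimension count) and is correct in outline. One sentence is garbled, though: you write that ``the exponents $i_j$ run over cosets that are \emph{not} roots of $h^{\ast}$,'' which is backwards --- by hypothesis $h^{\ast}(x)=\prod_{j\in S} m_{\alpha^{i_j}}(x)$, so the $\alpha^{i_j}$ are exactly the roots of $h^{\ast}$. The vanishing in the orthogonality step comes instead from the fact that the discrete Fourier transform of $w_{\underline{\lambda}}$ is supported on $\bigcup_{j\in S} B_{i_j}$, while membership in $C=\langle g\rangle$ requires vanishing of the transform at the zeros of $g$, and those two sets are complementary (zeros of $g$ versus zeros of $h$, after the reciprocal). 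If you clean up that passage, or simply replace the whole orthogonality paragraph with the DFT/Mattson--Solomon formulation, the argument goes through; the dimension count you give is fine as stated.
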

In connection with the trace representation above, we will use the following theorem for our results. 
\begin{theorem}\label{ZEROTRACE}(\cite{0TRACE}, Theorem 2.5)
	For $1\leq j \leq t$, let $i_j \geq 1$ be positive integers which are in different $q$-cyclotomic cosets modulo $n$, where $n=q^m - 1$. For $\lambda_1, \ldots , \lambda_t \in \F_{q^m}$, 
	\begin{align*}\T_{q^mq}(\lambda_1 x^{i_1}+\cdots+ \lambda_t x^{i_t})=0 \; \; \textit{for all} \; \; x\in \F_{q^m}
		\end{align*}
	if and only if  $|B_j|=\delta_j<m$ and $\T_{q^mq^{\delta_j}}(\lambda_j) = 0$ for all $j = 1, \ldots,  t$.
\end{theorem}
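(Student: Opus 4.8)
The plan is to translate the condition ``$\T_{q^mq}(\lambda_1 x^{i_1}+\cdots+\lambda_t x^{i_t})=0$ for all $x\in\F_{q^m}$'' into the identical vanishing of a reduced polynomial expression and then simply read off its coefficients. Since each $i_j\ge 1$, the left-hand side vanishes at $x=0$ automatically, so it is enough to work over $\F_{q^m}^{\ast}$, on which every exponent $e\ge 1$ may be reduced modulo $n=q^m-1$; and because the $n$ monomials $1,x,\dots,x^{n-1}$ are linearly independent as functions on $\F_{q^m}^{\ast}$ (a Vandermonde argument, using $|\F_{q^m}^{\ast}|=n$), such an expression vanishes identically if and only if each coefficient is zero.

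First I would expand each trace term as $\T_{q^mq}(\lambda_j x^{i_j})=\sum_{k=0}^{m-1}\lambda_j^{q^k}x^{i_jq^k}$ and reduce all exponents modulo $n$. As $k$ runs over $0,1,\dots,m-1$, the residue of $i_jq^k$ modulo $n$ runs over the $q$-cyclotomic coset $B_j=\{i_j,i_jq,\dots,i_jq^{\delta_j-1}\}$, cyclically with period $\delta_j$; in particular $\delta_j\mid m$ (because $q^m\equiv 1\pmod n$), so each element of $B_j$ is produced exactly $m/\delta_j$ times. The structural point that makes everything go through is that $B_1,\dots,B_t$ are pairwise disjoint by hypothesis, so no monomial arising from the index $j$ can coincide with one arising from an index $j'\neq j$. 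Consequently the whole expression vanishes on $\F_{q^m}^{\ast}$ if and only if, for each $j$ separately, $\sum_{k=0}^{m-1}\lambda_j^{q^k}x^{\,i_jq^k\bmod n}$ does.

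For a fixed $j$ I would now group the $m$ terms according to the residue of $k$ modulo $\delta_j$. Writing $k=\ell+r\delta_j$ with $0\le\ell<\delta_j$ and $0\le r<m/\delta_j$, the coefficient of $x^{\,i_jq^\ell\bmod n}$ is
\[
\sum_{r=0}^{m/\delta_j-1}\lambda_j^{q^{\ell+r\delta_j}}
=\Bigl(\sum_{r=0}^{m/\delta_j-1}\lambda_j^{q^{r\delta_j}}\Bigr)^{q^\ell}
=\bigl(\T_{q^mq^{\delta_j}}(\lambda_j)\bigr)^{q^\ell},
\]
where the last equality uses $\delta_j\mid m$ together with $\T_{q^mq^{\delta_j}}(y)=\sum_{r=0}^{m/\delta_j-1}y^{q^{r\delta_j}}$. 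Since the exponents $i_jq^\ell\bmod n$ for $\ell=0,\dots,\delta_j-1$ are precisely the $\delta_j$ distinct elements of $B_j$, the $j$-th block vanishes exactly when $\bigl(\T_{q^mq^{\delta_j}}(\lambda_j)\bigr)^{q^\ell}=0$ for every $\ell$, and since the Frobenius $y\mapsto y^{q}$ is a bijection this amounts to $\T_{q^mq^{\delta_j}}(\lambda_j)=0$. Taking the conjunction over all $j$ gives the asserted equivalence; the clause $\delta_j<m$ merely isolates the trivial case, for when $\delta_j=m$ one has $\T_{q^mq^{\delta_j}}(\lambda_j)=\lambda_j$ and the condition collapses to $\lambda_j=0$, so that term disappears and one may assume throughout that every $\lambda_j\neq 0$.

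The step calling for the most care is the reduction modulo $n$: one must verify that $k\mapsto i_jq^k\bmod n$ is exactly $(m/\delta_j)$-to-one onto $B_j$, that $\delta_j\mid m$, and---the essential point---that the pairwise disjointness of $B_1,\dots,B_t$ prevents any cancellation between monomials belonging to different indices. This disjointness is precisely where the hypothesis that $i_1,\dots,i_t$ lie in distinct $q$-cyclotomic cosets is used; once it is secured, what remains is the standard correspondence between functions on $\F_{q^m}^{\ast}$ and polynomials of degree below $n$, together with the elementary summation formula for the relative trace.
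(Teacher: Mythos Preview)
The paper does not prove this theorem; it is quoted verbatim from \cite{0TRACE} (G\"uneri, Theorem~2.5) and used as a black box in the subsequent sections. So there is no ``paper's own proof'' to compare against.

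That said, your argument is correct and is essentially the standard proof one finds in the literature for this kind of result. Expanding the absolute trace into Frobenius conjugates, reducing exponents modulo $n=q^m-1$, using the pairwise disjointness of the cosets $B_j$ to separate the blocks, and then invoking the linear independence of $1,x,\dots,x^{n-1}$ as functions on $\F_{q^m}^{\ast}$ to force each coefficient $\bigl(\T_{q^mq^{\delta_j}}(\lambda_j)\bigr)^{q^\ell}$ to vanish---each of these steps is sound and you have identified the correct places where care is needed (in particular that $\delta_j\mid m$ and that the map $k\mapsto i_jq^k\bmod n$ is $(m/\delta_j)$-to-one onto $B_j$). Your handling of the clause $\delta_j<m$ is also right: the theorem is really a statement about the nonzero $\lambda_j$, and when $\delta_j=m$ the relative trace degenerates to the identity, forcing $\lambda_j=0$ and hence removing that term. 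One cosmetic remark: since $i_j\geq 1$ and $\gcd(q,n)=1$, none of the reduced exponents $i_jq^k\bmod n$ is zero, so you are in fact only using the linear independence of $x,x^2,\dots,x^{n-1}$ on $\F_{q^m}^{\ast}$; this is a minor point and does not affect the argument.
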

\section{Linear complementary dual cyclic codes}\label{LCD1}
We recall that the hull of a linear code $C$ is defined as $\hbox{Hull}(C)=C\cap C^{\perp}$ and we  denote the dimension of  $\hbox{Hull}(C)$ by $h(C)$. A linear code $C$ is called linear complementary dual (LCD) if $h(C)=0$. The  characterization of an LCD cyclic code of length $q^m-1$ with respect to its basic dual zero is given in the following theorem.

\begin{theorem}\label{LCD}
	Let $C$ be a  cyclic code of length $q^m - 1$ over $\F_q$, $\alpha$ be  a primitive $n$-th root of unity over $\F_q$. Let $\lbrace {i_1},\ldots , {i_t} \rbrace$ be the set of all $q$-cyclotomic coset leaders modulo $n$. Then $C$ is LCD if and only if  $\alpha^{i_j} \in \hbox{BZ}(C^\perp)$ implies that either $B_j=B_{-j}$ or $\alpha^{-i_j} \in \hbox{BZ}(C^\perp)$   for all $1 \leq j \leq t$. 
\end{theorem}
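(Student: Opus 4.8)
The plan is to reduce the LCD condition to a symmetry condition on the defining set of $C^{\perp}$ and then translate that into the language of basic dual zeros. Write $g(x)$ for the generator polynomial of $C$, set $h(x)=(x^n-1)/g(x)$ with $n=q^m-1$, and recall $C^{\perp}=\langle h^{*}(x)\rangle$. Since $\gcd(n,q)=1$, the polynomial $x^n-1=\prod_{i=0}^{n-1}(x-\alpha^i)$ is squarefree, so a cyclic code $E$ of length $n$ is completely determined by its defining set $Z(E)\subseteq\Z_n$, the set of exponents $i$ with $E(\alpha^i)=0$; moreover $Z(E)=\Z_n$ if and only if $E=\{0\}$, and $Z(E_1\cap E_2)=Z(E_1)\cup Z(E_2)$. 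Hence $C$ is LCD, i.e. $C\cap C^{\perp}=\{0\}$, if and only if $Z(C)\cup Z(C^{\perp})=\Z_n$.

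Next I would compute $Z(C^{\perp})$ explicitly. The zeros of $h^{*}(x)$ are the inverses of the zeros of $h(x)$, and the zeros of $h(x)$ are exactly the $\alpha^i$ with $i\in\Z_n\setminus Z(C)$; therefore $Z(C^{\perp})=-U$, where $U:=\Z_n\setminus Z(C)$. Substituting, $C$ is LCD if and only if $U\subseteq -U$, and since $x\mapsto -x$ is an involution of $\Z_n$ this is equivalent to $U=-U$, equivalently (taking complements) to $Z(C^{\perp})=-U$ being closed under negation. This is the defining-set form of the familiar statement that, when $x^n-1$ is squarefree, a cyclic code is LCD precisely when its generator polynomial is self-reciprocal. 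Now $Z(C^{\perp})$ is the set of zero-exponents of $h^{*}(x)=\prod_{j\in S}m_{\alpha^{i_j}}(x)$, so $Z(C^{\perp})=\bigcup_{j\in S}B_{i_j}$.

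It remains to rewrite ``$\bigcup_{j\in S}B_{i_j}$ is closed under negation'' in terms of $\mathrm{BZ}(C^{\perp})$. Because the $q$-cyclotomic cosets partition $\Z_n$, we have $-\bigl(\bigcup_{j\in S}B_{i_j}\bigr)=\bigcup_{j\in S}B_{-i_j}\subseteq\bigcup_{j\in S}B_{i_j}$ if and only if for every $j\in S$ the coset $B_{-i_j}$ coincides with $B_{i_{j'}}$ for some $j'\in S$ (a cyclotomic coset is either equal to or disjoint from each coset of the union). For a fixed $j\in S$ this occurs in one of two mutually exclusive ways: $j'=j$, which by Lemma \ref{SELF} is exactly the condition that $m_{\alpha^{i_j}}$ is self-reciprocal, i.e. $B_j=B_{-j}$; or $j'\neq j$, in which case $m_{\alpha^{i_{j'}}}=m_{\alpha^{-i_j}}$ is one of the irreducible factors of $h^{*}(x)$, i.e. $\alpha^{-i_j}$ is a zero of the generator polynomial of $C^{\perp}$, which is what ``$\alpha^{-i_j}\in\mathrm{BZ}(C^{\perp})$'' means. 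Reading the chain of equivalences in both directions yields precisely the statement of the theorem.

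The arguments above are all routine once set up; the only points requiring care are keeping the direction of inversion correct in the computation of $Z(C^{\perp})$, so that the symmetry condition lands on the dual's defining set (as the theorem demands) rather than on $Z(C)$, and interpreting the membership ``$\alpha^{-i_j}\in\mathrm{BZ}(C^{\perp})$'' up to cyclotomic equivalence, since the coset leader of $-i_j$ need not be the representative $-i_j\bmod n$ itself. An alternative, in the spirit of the trace machinery recalled above, is to represent both $C$ and $C^{\perp}$ by Proposition \ref{tracerepr} (using $\mathrm{BZ}(C^{\perp})$ and $\mathrm{BZ}(C)$ respectively) and then invoke Theorem \ref{ZEROTRACE} to determine when the two trace forms can agree on a nonzero codeword; this recovers the same dichotomy, but at the cost of tracking the coset sizes $\delta_j$ and the relative traces $\T_{q^m q^{\delta_j}}$, so I would prefer the defining-set route.
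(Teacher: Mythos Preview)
Your proof is correct but takes a genuinely different route from the paper's. The paper argues both directions through the trace representation: for the forward direction it shows that if some $\alpha^{i_j}\in\mathrm{BZ}(C^{\perp})$ has $B_j\cap B_{-j}=\emptyset$ and $\alpha^{-i_j}\notin\mathrm{BZ}(C^{\perp})$, then $\alpha^{i_j}\in\mathrm{BZ}(C^{\perp})\cap\mathrm{BZ}(C)$, and then exhibits an explicit nonzero element $(\T_{q^mq}(\lambda x^{i_j}))_{x\in\F_{q^m}^*}$ in $C\cap C^{\perp}$ using Proposition~\ref{tracerepr} and Theorem~\ref{ZEROTRACE}; for the converse it deduces that $h^{*}$ is self-reciprocal, hence $\mathrm{BZ}(C^{\perp})\cap\mathrm{BZ}(C)=\emptyset$, and then applies Theorem~\ref{ZEROTRACE} again to the equality of two trace forms to rule out any nonzero element of $\mathrm{Hull}(C)$. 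You instead bypass the trace machinery entirely: you reduce ``$C$ is LCD'' to the defining-set identity $Z(C)\cup Z(C^{\perp})=\Z_n$, compute $Z(C^{\perp})=-(\Z_n\setminus Z(C))$, and conclude that LCD is equivalent to $Z(C^{\perp})=-Z(C^{\perp})$, which you then read coset-by-coset in terms of $\mathrm{BZ}(C^{\perp})$. Your argument is more elementary and in fact valid for any length $n$ with $\gcd(n,q)=1$, not just $n=q^m-1$; the paper's approach, on the other hand, is tailored to showcase the trace tools (Proposition~\ref{tracerepr}, Theorem~\ref{ZEROTRACE}) that it uses uniformly throughout Sections~\ref{ONEHULL} and~\ref{LCP}. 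Your closing remarks about the needed care---the sign bookkeeping in $Z(C^{\perp})$ and the reading of ``$\alpha^{-i_j}\in\mathrm{BZ}(C^{\perp})$'' up to cyclotomic equivalence---are on point and match how the paper implicitly uses that notation.
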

\begin{proof}
Suppose that $C=<g(x)>$ and $C^\perp=<h^*(x)>$.  That is, $g(x)h(x)=x^n - 1$. We also have  $\gcd(g(x),h(x))=1$, since the polynomial $x^n - 1$ has no repeated factors as   $\gcd(n, q)=1$.
	These together imply that 
	$g^{\ast}(x)h^{\ast}(x)=x^n-1$ and $  \gcd(g^{\ast}(x),h^{\ast}(x))=1.$
	Since  $(C^{\perp})^{\perp}=C$ the basic dual zero of $C^{\perp}$ is a set of representatives of the roots of $g(x)$,  which is equal to $\hbox{BZ}(C)$. 
	

Suppose on the contrary that $C$ is LCD and there exists $1\leq j \leq t$ such that $\alpha^{i_j}\in \hbox{BZ}(C^{\perp})$, $B_j\cap B_{-j}=\emptyset$ and $\alpha^{-i_j}\notin \hbox{BZ}(C^{\perp})$. We  without loss of generality  assume that $j=1$. The assumptions  $\alpha^{i_1}\in \hbox{BZ}(C^{\perp})$ and $\alpha^{-i_1}\notin \hbox{BZ}(C^{\perp})$ imply that $m_{\alpha^{i_1}}(x)\mid h^{\ast}(x)$  and $m_{\alpha^{-i_1}}(x)\nmid h^{\ast}(x)$, respectively.  Since $g^*(x)h^*(x)=x^n - 1$ and $\gcd(g^*(x), h^*(x))=1$, we obtain  $m_{\alpha^{-i_1}}(x)\mid g^{\ast}(x)$, consequently $m_{\alpha^{i_1}}(x)\mid g(x)$.	Therefore, $\alpha^{i_1}\in \hbox{BZ}(C^{\perp})\cap \hbox{BZ}(C)$. 
Assume that $\hbox{BZ}(C^{\perp})=\lbrace \alpha^{i_1} \rbrace \cup T_1$ and $\hbox{BZ}(C)=\hbox{BZ}((C^{\perp})^{\perp})=\lbrace \alpha^{i_1} \rbrace \cup T_2$. By Proposition \ref{tracerepr}, the trace representations of $C$ and $C^{\perp}$ are as follows:
	\begin{align*}
	C=\Big\lbrace\Big(\T_{q^mq}(\lambda_1 x^{i_1}+\sum_{\alpha^{i_j}\in T_1} \lambda_j x^{i_j})\Big)_{x\in \F_{q^m}^{\ast}} \; \big | \; \lambda_j \in \F_{q^m} \Big\rbrace,\\
	C^{\perp}=\Big\lbrace \Big(\T_{q^mq}(\beta_1 x^{i_1}+\sum_{\alpha^{i_j}\in T_2} \beta_j x^{i_j})\Big)_{x\in \F_{q^m}^{\ast}} \; \big | \; \beta_j \in \F_{q^m} \Big\rbrace.
	\end{align*}
	We can take $\lambda_h=\beta_l=0$ for all $\alpha^{i_h}\in T_1$,  $\alpha^{i_l}\in T_2$, and  $\lambda_1=\beta_1=\lambda$ such that $\T_{q^mq^{\delta_j}}(\lambda) \neq 0$.  Then we obtain 
	$ c=\big(\T_{q^mq}(\lambda x^{i_1})\big)_{x \in \F_{q^m}^{\ast}} \in C\cap C^{\perp}. $ Since $\T_{q^mq}(\lambda) \neq 0 $, $c \neq 0$ by Theroem \ref{ZEROTRACE}. This contradicts the assumption that $C$ is LCD.

Conversely, assume that $\alpha^{i_j}\in \hbox{BZ}(C^{\perp})$ implies that either $B_j=B_{-j}$ or $\alpha^{-i_j}\in \hbox{BZ}(C^{\perp})$ for all $i_j $, $1\leq j \leq t$.
If $B_j=B_{-j}$, then we have   $m_{\alpha^{i_j}}(x)=m_{\alpha^{i_j}}^{\ast}(x)$ by Lemma \ref{SELF}. 
	If	$B_j\cap B_{-j}=\emptyset$ and $\alpha^{-i_j}\in \hbox{BZ}(C^{\perp})$, then we have $m_{\alpha^{i_j}}(x) \ | \ h(x)$ and  $m_{\alpha^{i_j}}^{\ast}(x) \ | \ h^*(x)$. These together imply that  $h^{\ast}(x)$ is self-reciprocal. That is, 
	$g(x)h^{\ast}(x)=x^n-1 $ and $ \gcd(g(x),h^{\ast}(x))=1.$
	Therefore, $\hbox{BZ}(C^{\perp}) \cap \hbox{BZ}(C) =\emptyset$. As $C  \oplus C^{\perp}=\F_{q^n}$, we have  $\hbox{BZ}(C) \cup \hbox{BZ}(C^{\perp})=T$. Thus, we  assume without loss of generality that $\hbox{BZ}(C^{\perp})=\lbrace \alpha^{i_1},\ldots,\alpha^{i_s}\rbrace$ and $\hbox{BZ}(C)=\lbrace \alpha^{i_{s+1}},\ldots,\alpha^{i_t}\rbrace$. Then by Proposition \ref{tracerepr}, the trace representations of $C$ and $C^{\perp}$ are as follows:
	\begin{align*}
	C&=\Big \lbrace \Big(\T_{q^mq}(\lambda_1x^{i_1}+\cdots+\lambda_s x^{i_s})\Big)_{x\in \F_{q^m}^{\ast}} \; \big| \; \lambda_j \in \F_{q^m}, 1 \leq j \leq s \Big\rbrace,\\
	C^{\perp}&=\Big\lbrace \Big(\T_{q^mq}(\lambda_{s+1}x^{i_{s+1}}+\cdots+\lambda_t x^{i_t})\Big)_{x\in \F_{q^m}^{\ast}} \; \big | \; \lambda_j \in \F_{q^m},  s + 1 \leq j \leq t \Big \rbrace.
	\end{align*}
Suppose on the contrary that $\hbox{Hull}(C) \neq \{0\} $.  Then there exists	$0\ne c \in \hbox{Hull}(C)$,  and  $\lambda_1, \ldots, \lambda_t \in \F_{q^m}$ such that 
\begin{align*}
	c=\Big(\T_{q^mq}(\lambda_1x^{i_1}+\cdots+\lambda_s x^{i_s})\Big)_{x\in \F_{q^m}^{\ast}}=\Big(\T_{q^mq}(\lambda_{s+1}x^{i_{s+1}}+\cdots+\lambda_t x^{i_t})\Big)_{x\in \F_{q^m}^{\ast}}
	\end{align*}
	Equivalently, 
	\begin{align}\label{eq:trace}
		\Big(\T_{q^mq}(\lambda_1x^{i_1}+\cdots+\lambda_s x^{i_s}-\lambda_{s+1}x^{i_{s+1}}-\cdots-\lambda_t x^{i_t})\Big)_{x\in \F_{q^m}^{\ast}}=0.
	\end{align}
	By Theorem \ref{ZEROTRACE}, the equality in \eqref{eq:trace} holds if and only if $|B_j|=\delta_j<m$ and $\T_{q^mq^{\delta_j}}(\lambda_j)=0$ for all $1\leq j \leq t$. We know that the set $\hbox{BZ}(C^{\perp}) \cup \hbox{BZ}(C)$ contains all the leaders of $q$-cyclotomic cosets  modulo $n$, in particular, the coset leader that contains 1. Since the cyclotomic coset that contains 1  has cardinality $m$, we have a contradiction. Hence, $C$ is LCD.
\end{proof}
\begin{remark}
By Theorem \ref{ZEROTRACE}, a cyclic code $C$ is LCD if and only if for any divisor $m_{\alpha^{i_j}}(x)$ of $h^{\ast}(x)$, we have $ m_{\alpha^{i_j}}^{\ast}(x)$  is also a divisor.  Therefore, $C$ is LCD if and only if $h^{\ast}(x)$ is self-reciprocal, which implies
$$g^{\ast}(x)=\frac{x^n-1}{h^{\ast}(x)}=\frac{x^n-1}{h(x)}=g(x)$$
is self-reciprocal.  This has been also observed by Massey in \cite{LCDCYCLIC}.
\end{remark} 

\begin{corollary}\label{cor:sr}
	Let $\lbrace {i_1},\ldots , {i_t} \rbrace$ be the set of all q-cyclotomic coset leaders modulo $n$. If $B_j=B_{-j}$ for all $1\leq j \leq t$, then any cyclic code of length $n$ over $\F_q$ is LCD.
\end{corollary}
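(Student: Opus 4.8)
The plan is to reduce the statement to the criterion recalled in the Remark above (due to Massey \cite{LCDCYCLIC}): a cyclic code is LCD if and only if its generator polynomial is self-reciprocal. So I would start from an arbitrary cyclic code $C=\langle g(x)\rangle$ of length $n$ over $\F_q$, where $g(x)$ is the monic generator polynomial, hence a monic divisor of $x^n-1$. By the factorization \eqref{factor}, $g(x)$ is a product of some of the minimal polynomials $m_{\alpha^{i_j}}(x)$; say $g(x)=\prod_{j\in J}m_{\alpha^{i_j}}(x)$ for a subset $J\subseteq\{1,\ldots,t\}$.

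Next I would invoke the hypothesis together with Lemma \ref{SELF}: since $B_j=B_{-j}$ for every coset leader $i_j$, each factor $m_{\alpha^{i_j}}(x)$ is self-reciprocal. It then remains to observe that a product of monic self-reciprocal polynomials is again monic and self-reciprocal. Indeed, since $\gcd(n,q)=1$ the polynomial $x^n-1$, and therefore $g(x)$, has nonzero constant term, so the reciprocal $g^{\ast}(x)$ is defined; the reciprocal operation is multiplicative on monic polynomials with nonzero constant term, whence $g^{\ast}(x)=\prod_{j\in J}m_{\alpha^{i_j}}^{\ast}(x)=\prod_{j\in J}m_{\alpha^{i_j}}(x)=g(x)$, the middle equality being exactly Lemma \ref{SELF} under our hypothesis. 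Thus $g(x)$ is self-reciprocal, and by the Remark following Theorem \ref{LCD} the code $C$ is LCD.

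Alternatively, in the case $n=q^m-1$ one can bypass the polynomial bookkeeping and appeal directly to Theorem \ref{LCD}: under the hypothesis $B_j=B_{-j}$ holds for every $j$, so for any cyclic code $C$ the implication ``$\alpha^{i_j}\in\hbox{BZ}(C^{\perp})$ implies $B_j=B_{-j}$ or $\alpha^{-i_j}\in\hbox{BZ}(C^{\perp})$'' is satisfied vacuously through its first alternative, and Theorem \ref{LCD} yields that $C$ is LCD.

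There is no genuine obstacle here: the statement is an immediate consequence of either Theorem \ref{LCD} or the self-reciprocity criterion. The only point requiring a moment of care is verifying that the reciprocal of a product equals the product of the reciprocals for monic polynomials with nonzero constant term, which is entirely routine.
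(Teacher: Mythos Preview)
Your proposal is correct and follows essentially the same route as the paper: invoke Lemma \ref{SELF} to see that every irreducible factor $m_{\alpha^{i_j}}(x)$ of $x^n-1$ is self-reciprocal, conclude that any monic divisor $g(x)$ of $x^n-1$ is self-reciprocal, and then apply Massey's criterion (the Remark following Theorem \ref{LCD}). Your write-up supplies more detail on the multiplicativity of the reciprocal and adds the alternative appeal to Theorem \ref{LCD} in the case $n=q^m-1$, but the core argument is the same as the paper's.
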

\begin{proof}
	Since $B_j=B_{-j}$ for all $1\leq j \leq t$, by  Lemma \ref{SELF}, the polynomial $m_{\alpha^{i_j}}(x)$ is self-reciprocal for any $1\leq j \leq t$. This means that any factor $g(x)$ of $x^n-1$ is self-reciprocal. Thus the corresponding cyclic code $C$ is LCD. 
\end{proof}
\begin{example}

		 If  $q=2$ and $m=9$, then the polynomial $x^9 - 1$ has the following factorization into monic irreducible polynomials over $\F_q$: $x^9-1=(x+1)(x^2+x+1)(x^6+x^3+1). $ Since
		all the factors of $x^9 - 1$ are self-reciprocal, any binary cyclic code of length $9$ is LCD by Corollary \ref{cor:sr}.\\
		 If $q=3$ and $m=10$, then  the polynomial $x^{10} - 1$ has the following factorization into monic irreducible polynomials over $\F_q$: 
		$x^{10}-1=(x + 1)(x + 2)(x^4 + x^3 + x^2 + x + 1)(x^4 + 2x^3 + x^2 + 2x + 1),$
		Since all the factors  of $x^{10} - 1$ are self-reciprocal, any ternary cyclic code of length $10$ is LCD by Corollary \ref{cor:sr}.

\end{example}

\section{One-dimensional hull cyclic codes}\label{ONEHULL}

In this section, we present a condition for a cyclic code to have one-dimensional hull in terms of its basic dual zero set. We will use the following theorem. 

\begin{theorem}\label{genint}(\cite{PLESS}, Theorem 4.3.7)
	Let $C_i$ be a cyclic code of length $n$ over $\F_q$ with the generator polynomial $g_i(x)$ for $i = 1, 2$. Then $C_1 \cap C_2$ has generator polynomial $\hbox{lcm}(g_1(x), g_2(x))$, where $\hbox{lcm}(g_1(x), g_2(x))$ denotes the least common multiple of the polynomials $g_1(x)$ and $g_2(x)$. 
\end{theorem}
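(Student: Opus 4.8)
The plan is to pass to the ring $R=\F_q[x]/(x^n-1)$, in which a cyclic code of length $n$ over $\F_q$ is exactly an ideal. Since the intersection of two ideals of $R$ is again an ideal, $C_1\cap C_2$ is a cyclic code; and because $R$ is a principal ideal domain (equivalently, every ideal of $R$ is generated by a unique monic divisor of $x^n-1$), $C_1\cap C_2$ has a unique monic generator polynomial $g(x)$ with $g(x)\mid x^n-1$. The whole content of the theorem is therefore the identification $g(x)=\hbox{lcm}(g_1(x),g_2(x))$, and I would carry this out by first rewriting codeword membership as a plain divisibility condition in $\F_q[x]$.

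The key step is the following membership characterization. Since $g_i(x)$ is monic and $g_i(x)\mid x^n-1$, write $x^n-1=g_i(x)h_i(x)$; then a polynomial $c(x)$ of degree less than $n$ lies in $C_i=\langle g_i(x)\rangle$ if and only if $g_i(x)\mid c(x)$ in $\F_q[x]$. Indeed, any residue $a(x)g_i(x)\bmod(x^n-1)$ equals $g_i(x)\bigl(a(x)-b(x)h_i(x)\bigr)$ for a suitable $b(x)$, hence is divisible by $g_i(x)$; conversely every multiple of $g_i(x)$ of degree $<n$ already lies in the ideal. Applying this to $i=1$ and $i=2$ gives, for $\deg c<n$,
$$c(x)\in C_1\cap C_2\iff g_1(x)\mid c(x)\ \text{and}\ g_2(x)\mid c(x)\iff \hbox{lcm}(g_1(x),g_2(x))\mid c(x),$$
where the last equivalence is the defining property of the least common multiple in the unique factorization domain $\F_q[x]$.

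It remains to check that $\ell(x):=\hbox{lcm}(g_1(x),g_2(x))$ is itself a monic divisor of $x^n-1$, so that the membership characterization applies to it as well. This is a lattice fact in the PID $\F_q[x]$: since $(\ell(x))=(g_1(x))\cap(g_2(x))$ and $x^n-1$ is a multiple of each $g_i(x)$, we get $x^n-1\in(g_1(x))\cap(g_2(x))=(\ell(x))$, i.e.\ $\ell(x)\mid x^n-1$; alternatively, as $\gcd(n,q)=1$ the polynomial $x^n-1$ is squarefree and $\ell(x)$ is simply the product of the distinct monic irreducible factors occurring in $g_1(x)$ or in $g_2(x)$, each of which divides $x^n-1$. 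Hence $C_1\cap C_2=\{c(x):\deg c<n,\ \ell(x)\mid c(x)\}=\langle\ell(x)\rangle$ with $\ell(x)$ monic and $\ell(x)\mid x^n-1$, so $\ell(x)=\hbox{lcm}(g_1(x),g_2(x))$ is the generator polynomial of $C_1\cap C_2$, as claimed.

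There is no genuine obstacle in this argument; the only point that requires a little care is the translation between ``the ideal of $R$ generated by $g_i(x)$'' and ``polynomials of degree $<n$ divisible by $g_i(x)$'', which relies precisely on $g_i(x)\mid x^n-1$, together with the standard fact that the least common multiple of two divisors of $x^n-1$ again divides $x^n-1$. (One could instead derive the result by dualizing: $(C_1\cap C_2)^\perp=C_1^\perp+C_2^\perp$ and the sum of cyclic codes is generated by a gcd, but the direct divisibility approach above is cleaner and self-contained.)
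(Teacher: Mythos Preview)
Your proof is correct and follows the standard textbook argument. Note, however, that the paper does not actually prove this statement: it is quoted as Theorem~4.3.7 of Huffman--Pless \cite{PLESS} and used as a black box, so there is no ``paper's own proof'' to compare against. Your write-up therefore supplies what the paper omits.

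One terminological slip to fix: you call $R=\F_q[x]/(x^n-1)$ a principal ideal \emph{domain}, but for $n>1$ it has zero divisors and is not a domain. What you need (and what your parenthetical already says) is that $R$ is a principal ideal \emph{ring} in which every ideal is generated by a unique monic divisor of $x^n-1$. The rest of the argument---the membership characterization $c(x)\in C_i\iff g_i(x)\mid c(x)$ in $\F_q[x]$ for $\deg c<n$, the lcm step, and the check that $\hbox{lcm}(g_1,g_2)\mid x^n-1$---is clean and self-contained.
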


We recall that  $\beta \in \F_{q^m}$ is called a normal element over $\F_q$ if the set  $\{\beta, \beta^q, \ldots, \beta^{q^{m-1}}\}$ forms a basis of $\F_{q^m}$ over $\F_q$. We need  the following lemma for the main result of this section.  

\begin{lemma}\label{linindep}
	Let $\beta$ be a normal element of $\F_{q^m}$ over $\F_q$.  Suppose that $k, l $ are positive integers  with $B_k \cap B_l=\emptyset$. Then the vectors $\big(\T_{q^mq}(\beta x^k)\big)_{x \in \F_{q^m}^{\ast}}$ and $\big(\T_{q^mq}(\beta x^l)\big)_{x \in \F_{q^m}^{\ast}}$ are linearly independent over $\F_q$. \\

\end{lemma}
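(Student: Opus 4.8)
The plan is to translate $\F_q$-linear independence of the two vectors into a statement about a vanishing relative trace, and then combine Theorem~\ref{ZEROTRACE} with the fact that a normal element has all of its subfield traces nonzero. So suppose $a,b\in\F_q$ satisfy $a\big(\T_{q^mq}(\beta x^k)\big)_{x\in\F_{q^m}^{\ast}}+b\big(\T_{q^mq}(\beta x^l)\big)_{x\in\F_{q^m}^{\ast}}=0$. Since $a,b\in\F_q$ and $\T_{q^mq}$ is $\F_q$-linear, this is exactly $\T_{q^mq}(a\beta x^{k}+b\beta x^{l})=0$ for all $x\in\F_{q^m}^{\ast}$, and the relation also holds trivially at $x=0$, hence for all $x\in\F_{q^m}$. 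The goal is to conclude $a=b=0$.

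Assume, for contradiction, that $(a,b)\neq(0,0)$. A normal element spans $\F_{q^m}$ over $\F_q$, so $\beta\neq 0$, and therefore at least one of $a\beta,b\beta$ is a nonzero element of $\F_{q^m}$. The hypothesis $B_k\cap B_l=\emptyset$ says that $k$ and $l$ lie in distinct $q$-cyclotomic cosets modulo $n=q^m-1$, so Theorem~\ref{ZEROTRACE} is applicable; when exactly one of $a,b$ is zero I would first drop the corresponding (zero) term and apply the theorem to the single remaining monomial, and when both are nonzero I would apply it directly to the two-term sum. In either case the theorem forces $\delta_k=|B_k|<m$ and $\T_{q^mq^{\delta_k}}(a\beta)=0$ whenever $a\neq0$, and likewise $\delta_l<m$ and $\T_{q^mq^{\delta_l}}(b\beta)=0$ whenever $b\neq0$. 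Recall also that $\delta_k$ and $\delta_l$ divide $m$.

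The crucial observation is that for every divisor $d$ of $m$ one has $\T_{q^mq^{d}}(\beta)\neq 0$. Indeed, $\T_{q^mq^{d}}(\beta)=\beta+\beta^{q^{d}}+\beta^{q^{2d}}+\cdots+\beta^{q^{m-d}}$ is an $\F_q$-linear combination of $\beta,\beta^{q},\ldots,\beta^{q^{m-1}}$ in which the coefficient of $\beta$ equals $1$; since $\beta$ is normal, these $m$ elements form an $\F_q$-basis of $\F_{q^m}$, so this nontrivial combination cannot vanish. Consequently, using $a\in\F_q\subseteq\F_{q^{\delta_k}}$ we get $\T_{q^mq^{\delta_k}}(a\beta)=a\,\T_{q^mq^{\delta_k}}(\beta)\neq 0$ whenever $a\neq 0$, and similarly for $b$, which contradicts the conclusion drawn from Theorem~\ref{ZEROTRACE}. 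Hence $a=b=0$, and the two vectors are linearly independent over $\F_q$.

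I expect the only subtle point to be the mild bookkeeping when invoking Theorem~\ref{ZEROTRACE} with a possibly-zero coefficient, which the case split above resolves; the substantive ingredient is the remark that normality of $\beta$ makes every subfield trace $\T_{q^mq^{d}}(\beta)$ nonzero.
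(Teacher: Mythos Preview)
Your argument is correct and takes a genuinely different route from the paper's. The paper does not appeal to Theorem~\ref{ZEROTRACE} here: from $\T_{q^mq}((c_1x^k+c_2x^l)\beta)=0$ it writes $(c_1x^k+c_2x^l)\beta=a_x^q-a_x$ via additive Hilbert~90, expands $a_x^q-a_x$ in the normal basis $\{\beta,\beta^{q},\ldots,\beta^{q^{m-1}}\}$, and reads off that $c_1x^k+c_2x^l\in\F_q$ for every $x$; then $\T_{q^mq}(\beta)\neq0$ forces $c_1x^k+c_2x^l=0$ identically, and evaluating at $x=1$ and at a primitive element yields $k\equiv l\pmod{q^m-1}$, contradicting $B_k\cap B_l=\emptyset$. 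Your approach replaces this chain by a single invocation of Theorem~\ref{ZEROTRACE} together with the clean fact that every intermediate trace $\T_{q^mq^{d}}(\beta)$ (for $d\mid m$) is nonzero when $\beta$ is normal, since it is a nontrivial $\F_q$-combination of the normal basis. This is shorter, uses the machinery already set up in Section~\ref{PRE}, and makes the role of normality completely transparent; it also sidesteps the paper's normal-basis comparison step (deducing $c_1x^k+c_2x^l\in\F_q$ from the expansion of $(c_1x^k+c_2x^l)\beta$), which is not immediate as written. The case split you flag when one coefficient vanishes is indeed the only bookkeeping needed.
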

\textbf{Proof}: The proof is by contradiction. 
Suppose that the vectors $\big(\T_{q^mq}(\beta x^k)\big)_{x \in \F_{q^m}^{\ast}}$ and $\big(\T_{q^mq}(\beta x^l)\big)_{x \in \F_{q^m}^{\ast}}$ are linearly dependent over $\F_q$. Then there exist  nonzero $c_1, c_2 \in \F_{q}$ such that 
\begin{align*}
	c_1\big(\T_{q^mq}(\beta x^k)\big)_{x \in \F_{q^m}^{\ast}} +c_2 \big(\T_{q^mq}(\beta x^l)\big)_{x \in \F_{q^m}^{\ast}}=\big(\T_{q^mq}(c_1\beta x^k+c_2 \beta x^l)\big)_{x \in \F_{q^m}^*}=0.
\end{align*}
This implies that there exists $0 \neq a_x \in \F_{q^m}$ such that $c_1x^k\beta+c_2 x^l\beta=(c_1x^k+c_2 x^l)\beta=a_x^q-a_x=b_x$, for each $x\in \F_{q^m}$. Since $\beta$ is a normal element of $\F_{q^m}$ over $\F_q$, the element $b_x$  has a unique expression of the form $b_x=\sum_{t=0}^{m - 1}d_{x,t}\beta^{q^t},$
where $d_{x,t} \in \F_{q}$. This implies that $c_1x^k+ c_2x^l=d_{x, 0}  \in \F_q$ for all $x \in \F_q$. Therefore, 
$\T_{q^mq}((c_1 x^k+c_2 x^l)\beta)=(c_1x^k+c_2x^l)\T_{q^mq}(\beta)=0$.  Since $\beta \in \F_{q^m}$ is normal over $\F_{q}$, $\T_{q^mq}(\beta)\neq 0$,  which means $c_1x^k+c_2x^l=0$ for all $x \in \F_{q^m}. $ Assume without loss of generality that $c_1\neq 0$ and let $x=1$.  Then $c_1=-c_2$. If we let $x =\theta$, where $\theta$ is a primitive element of $\F_{q^m}$, then $c_1x^k+c_2x^l=c_1(\theta^k - \theta^l)=0$. As $c_1 \neq 0$, we obtain $\theta^k - \theta^l=0$.   That is,  $\theta^{k-l}=1$, which  implies that $k \equiv l$ mod ($q^m-1$). This  contradicts the  assumption that $B_k \cap B_l = \emptyset$. Hence, the result follows.

\begin{theorem}\label{ouronedimhull}
For $n=q^m - 1$, let $C$ be a cyclic code of length $n$. Let $\lbrace{i_1},\ldots , {i_t} \rbrace$ be the set of all leaders of $q$-cyclotomic cosets modulo $n$, and $T=\lbrace \alpha^{i_j} \; | \; 1\leq j \leq t\rbrace$. Then ${h}(C)=1$ if and only if 
	the following holds.
	\begin{itemize}
		\item[i)]	There exists a unique $1\leq j \leq t$ such that $|B_j|=1$ and $B_j\cap B_{-j}=\emptyset$.
		\item[ii)] $\hbox{BZ}(C^{\perp})= T_1 \cup \lbrace \alpha^{i_j} \rbrace$, where $T_1 \subset T$ satisfies for any $\alpha^{i_h}\in T_1$, we have  either $B_h=B_{-h}$ or $\alpha^{-i_h}\in T_1$.
		\end{itemize}
\end{theorem}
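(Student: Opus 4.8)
The plan is to reduce the statement to a divisibility computation in $\F_q[x]$ and then translate it back to basic dual zeros, mirroring the proof of Theorem~\ref{LCD}. Write $C=\langle g(x)\rangle$ and $C^{\perp}=\langle h^{*}(x)\rangle$ with $g(x)h(x)=x^{n}-1$ and $\gcd(g(x),h(x))=1$ (no repeated factors, since $\gcd(n,q)=1$). By Theorem~\ref{genint}, $\hbox{Hull}(C)=C\cap C^{\perp}$ has generator polynomial $\hbox{lcm}(g(x),h^{*}(x))$, and since $\deg h^{*}=\deg h=n-\deg g$ one gets
\[
h(C)=n-\deg\hbox{lcm}(g(x),h^{*}(x))=\deg\gcd(g(x),h^{*}(x)).
\]
Because $x^{n}-1=\prod_{k=1}^{t}m_{\alpha^{i_k}}(x)$ is squarefree, $\gcd(g,h^{*})$ is the product of those $m_{\alpha^{i_k}}(x)$ dividing both $g$ and $h^{*}$, so $h(C)=\sum_{k\in J}|B_k|$, where $J=\{\,k:\ m_{\alpha^{i_k}}(x)\mid g(x)\text{ and }m_{\alpha^{i_k}}(x)\mid h^{*}(x)\,\}$; recalling from the proof of Theorem~\ref{LCD} that $\hbox{BZ}(C)$ consists of the $\alpha^{i_k}$ with $m_{\alpha^{i_k}}(x)\mid g(x)$, the set $J$ indexes $\hbox{BZ}(C)\cap\hbox{BZ}(C^{\perp})$.

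Next I would describe $J$ intrinsically in terms of $\hbox{BZ}(C^{\perp})$. Since the roots of $h^{*}$ are the inverses of the roots of $h$, one has $m_{\alpha^{i_k}}\mid h^{*}\iff m_{\alpha^{-i_k}}\mid h$, while $m_{\alpha^{i_k}}\mid g\iff m_{\alpha^{i_k}}\nmid h$ and $m_{\alpha^{i_k}}\mid h\iff m_{\alpha^{-i_k}}\mid h^{*}$. Combining these yields
\[
k\in J\ \Longleftrightarrow\ \alpha^{i_k}\in\hbox{BZ}(C^{\perp}),\quad \alpha^{-i_k}\notin\hbox{BZ}(C^{\perp}),\quad B_k\cap B_{-k}=\emptyset,
\]
where the third condition is in fact forced by the first two (if $B_k=B_{-k}$ then $m_{\alpha^{i_k}}=m_{\alpha^{-i_k}}$ would both divide and not divide $h$). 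Hence $h(C)=\sum_{k\in J}|B_k|$ with this explicit $J$, and the theorem is reduced to the counting statement: $h(C)=1$ if and only if $|J|=1$ and the unique $j\in J$ satisfies $|B_j|=1$.

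It then remains to check that ``$|J|=1$ with $|B_j|=1$'' is equivalent to (i)--(ii). For the forward direction, let $j$ be the unique element of $J$; then $|B_j|=1$ and $B_j\cap B_{-j}=\emptyset$, which gives (i), and for $T_1:=\hbox{BZ}(C^{\perp})\setminus\{\alpha^{i_j}\}$ I would verify that every $\alpha^{i_h}\in T_1$ is self-reciprocal or has $\alpha^{-i_h}\in T_1$ — otherwise $h$ itself would lie in $J$ (the alternative $\alpha^{-i_h}=\alpha^{i_j}$ being excluded using $\alpha^{-i_j}\notin\hbox{BZ}(C^{\perp})$) — which is (ii). For the converse I would assume (i) and (ii) with the union $T_1\cup\{\alpha^{i_j}\}$ disjoint and show $J=\{j\}$: first $j\in J$, since $\alpha^{-i_j}\notin\hbox{BZ}(C^{\perp})$ (were $\alpha^{-i_j}\in T_1$, then, $B_{-j}$ being non-self-reciprocal, (ii) would force $\alpha^{i_j}\in T_1$, against disjointness); second, no $k\ne j$ lies in $J$, since $\alpha^{i_k}\in\hbox{BZ}(C^{\perp})$ with $k\ne j$ puts $\alpha^{i_k}\in T_1$, whence $\alpha^{-i_k}\in T_1\subseteq\hbox{BZ}(C^{\perp})$ by (ii). Then $h(C)=|B_j|=1$ by (i). If a trace-representation proof in the style of Theorem~\ref{LCD} is preferred, the same conclusion follows by exhibiting the explicit hull codeword $\big(\T_{q^mq}(\lambda x^{i_j})\big)_{x\in\F_{q^m}^{*}}$ (which, as $|B_j|=1$ forces $x^{i_j}\in\F_q$ for all $x$, is a nonzero $\F_q$-multiple of $(x^{i_j})_{x\in\F_{q^m}^{*}}$ whenever $\T_{q^mq}(\lambda)\ne0$) and then invoking Theorem~\ref{ZEROTRACE}, together with Lemma~\ref{linindep} for the requisite independence, to see that after the reciprocally closed part of $\hbox{BZ}(C^{\perp})$ is cancelled against $\hbox{BZ}(C)$ every hull codeword is an $\F_q$-multiple of this one, so $\hbox{Hull}(C)$ is exactly one-dimensional.

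I expect the main obstacle to be the bookkeeping in the middle step: tracking precisely how $m_{\alpha^{i_k}}(x)$ and its reciprocal $m_{\alpha^{-i_k}}(x)$ distribute among $g$, $h$ and $h^{*}$, and then extracting from $h(C)=1$ both that the single exceptional cyclotomic coset is non-self-reciprocal \emph{and} of size $1$, and the precise sense in which the index $j$ of (i) is uniquely determined — namely by $\alpha^{i_j}\in\hbox{BZ}(C^{\perp})$, $\alpha^{-i_j}\notin\hbox{BZ}(C^{\perp})$ and $B_j\cap B_{-j}=\emptyset$ — so that (ii) indeed pins it down. The remaining ingredients are just the squarefree factorization of $x^{n}-1$ and Theorems~\ref{genint} and~\ref{ZEROTRACE} together with Lemma~\ref{linindep}.
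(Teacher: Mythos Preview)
Your proposal is correct. The core identity $h(C)=\deg\gcd(g(x),h^{*}(x))=\sum_{k\in J}|B_k|$, with $J$ the set of coset leaders $k$ for which $\alpha^{i_k}\in\hbox{BZ}(C^{\perp})$ but $\alpha^{-i_k}\notin\hbox{BZ}(C^{\perp})$, is exactly right and immediately reduces $h(C)=1$ to the combinatorial statement ``$|J|=1$ and $|B_j|=1$''. Your bookkeeping translating this into (i)--(ii) is clean, including the two edge cases you single out (ruling out $\alpha^{-i_h}=\alpha^{i_j}$ in the forward direction, and showing $\alpha^{-i_j}\notin\hbox{BZ}(C^{\perp})$ in the converse).

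The paper takes a different route. In both directions it passes through the trace representation: for the ``if'' part it computes $\gcd(g,h^{*})=m_{\alpha^{i_j}}(x)$ (as you do), but then, instead of reading off $h(C)=\deg m_{\alpha^{i_j}}=1$ directly, it writes $\hbox{Hull}(C)=\{(\T_{q^mq}(\lambda x^{i_j}))_{x}\}$, expands $\lambda$ in a normal basis, and uses $|B_j|=1$ to collapse the spanning set to a single vector; for the ``only if'' part it argues by contradiction assuming \emph{two} singleton non-self-reciprocal cosets in $\hbox{BZ}(C^{\perp})\setminus T_1$ and invokes Lemma~\ref{linindep} to force the hull dimension to be at least $2$. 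Your purely polynomial argument is more elementary (no normal element, no Theorem~\ref{ZEROTRACE}, no Lemma~\ref{linindep} needed) and also more complete: the formula $h(C)=\sum_{k\in J}|B_k|$ handles in one stroke both failure modes --- a second index in $J$, or a single index with $|B_j|\geq 2$ --- whereas the paper's contradiction argument only explicitly treats the first of these. What the paper's approach buys is an explicit hull generator via the trace form, which you mention in passing at the end; if that is wanted, your final sentence already indicates how to recover it.
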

\textbf{Proof}: Let $n=q^m - 1$, a code $C=<g(x)>$ be  cyclic of length $n$, and $h(x)$ be the parity check polynomial of $C$.  Let $\lbrace{i_1},\ldots , {i_t} \rbrace$ be the set of all leaders of $q$-cyclotomic cosets modulo $n$, and $T=\lbrace \alpha^{i_j} \; | \; 1\leq j \leq t\rbrace$. 

Suppose that i) and ii) hold. We first show that $\hbox{BZ}((\hbox{Hull}(C))^{\perp})=\hbox{BZ}((C\cap C^{\perp})^\perp)=\lbrace \alpha^{i_j} \rbrace.$  By Theorem \ref{genint}, we know that $\hbox{Hull}(C)=C\cap C^{\perp}=<\hbox{lcm}(g(x),h^{\ast}(x))>=<\hbox{lcm}(\frac{x^n-1}{h(x)},h^{\ast}(x))>$. By ii), we can write $h^*(x)= m_{\alpha^{i_j}}(x)t(x),$ where $t(x)=\prod_{\alpha^{i_h}\in T_1} m_{\alpha^{i_h}}(x).$
Similar to the proof of Theorem \ref{LCD}, we can see that $t(x)$ is self-reciprocal. This implies that $h(x)=m^{\ast}_{\alpha^{i_j}}(x)t(x)$, and hence
$\gcd( g(x), h^{\ast}(x))=m_{\alpha^{i_j}}(x)$. Then we have $\hbox{lcm}(g(x), h^*(x))= \frac{g(x)h^*(x)}{\gcd(g(x), h^*(x))}=\frac{x^n - 1}{{m_{\alpha^{i_j}}^*(x)}}$. That is, $\hbox{Hull}(C)=<\frac{x^n-1}{m_{\alpha^{i_j}}^* (x)}>$, which implies that $\hbox{BZ}((\hbox{Hull}(C))^{\perp})=\lbrace \alpha^{i_j} \rbrace.$ By Proposition \ref{tracerepr},  the trace representation of $\hbox{Hull}(C)$ is as follows: 
\begin{align}\label{tracerephull}
\hbox{Hull}(C)=\big\lbrace \big(\T_{q^mq}(\lambda x^{i_j})\big)_{x \in \F_{q^m}^{\ast}} \ | \ \lambda \in \F_{q^m}\rbrace.
	\end{align}
 Let $\beta$ be a normal element of $\F_{q^m}$ over $\F_q$. Then for any $\lambda \in \F_{q^m}$ there exist $c_0, \ldots , c_{m-1} \in \F_{q}$ such that $\lambda=c_0 \beta+c_1 \beta^q+ \cdots + c_{m-1}\beta^{q^{m-1}}.$
Hence,  we get 
\begin{align*}
\big(\T_{q^mq}(\lambda x^{i_j})\big)_{x \in \F_{q^m}^{\ast}}=c_0\big(\T_{q^mq}(\beta  x^{i_j})\big)_{x \in \F_{q^m}^{\ast}}+\cdots +c_{m-1}\big(\T_{q^mq}(\beta^{q^{m-1}}x^{i_j})\big)_{x \in \F_{q^m}^{\ast}}.
\end{align*}
This implies that $\hbox{Hull}(C)$ is spanned by the vectors 
$\big(\T_{q^mq}(\beta^{q^r} x^{i_j})\big)_{x \in \F_{q^m}^{\ast}} $ for    $\ 0 \leq r \leq q^{m - 1}$, i.e.,  
$$\hbox{Hull}(C)= \hbox{span} \big \lbrace \big(\T_{q^mq}(\beta^{q^r} x^{i_j})\big)_{x \in \F_{q^m}^{\ast}} \ | \ 0 \leq r \leq q^{m - 1} \big \rbrace$$
by the equality in \eqref{tracerephull}. 
By i), $|B_j|=1$, i.e.,   $j \equiv jq^r$ mod ($q^m-1$) for all $0 \leq r \leq  q^{m - 1}$. Hence, for all $0 \leq r \leq q^{m - 1}$, the equality 
$\big(\T_{q^mq}(\beta^{q^r} x^{i_j})\big)_{x \in \F_{q^m}^{\ast}}=\big(\T_{q^mq}(\beta x^{i_j})\big)_{x \in \F_{q^m}^{\ast}}$ holds. That is, 
$\hbox{Hull}(C)=\hbox{span}\big\lbrace\big(\T_{q^mq}(\beta x^{i_j})\big)_{x \in \mathbb{F}_{q^m}^{\ast}}\big\rbrace$. Since $\beta$ is a normal element of $\F_{q^m}$ over $\F_q$, we have $\T_{q^mq}(\beta)\neq 0$. Then the vector
$(\T_{q^mq}(\beta x^{i_j})\big)_{x \in \mathbb{F}_{q^m}^{\ast}} \neq 0$ by  Theorem \ref{ZEROTRACE},  
and hence  ${h}(C)=1$.

Conversely, suppose on the contrary that ${h}(C)=1$, and there exist representatives  $i_j$ such that $\hbox{BZ}(C^{\perp})=T_1 \cup \lbrace \alpha^{i_1} , \alpha^{i_2} \rbrace$,  $B_{i_j}\cap B_{-i_j}=\emptyset$, $|B_{i_j}|=1$ and $\{\alpha^{-i_j}\}\ne \hbox{BZ}(C^{\perp})$ for $j=1,2$. Similar to the proof of Theorem \ref{LCD}, we obtain $\hbox{BZ}((\hbox{Hull}(C))^{\perp})=\lbrace \alpha^{i_1},\alpha^{i_2} \rbrace$. By Proposition \ref{tracerepr}, we have the following trace representation of $\hbox{Hull}(C)$:
$$\hbox{Hull}(C)=\big\lbrace \big(\T_{q^mq}(\lambda_1 x^{i_1} +\lambda_2 x^{i_2})\big)_{x \in \mathbb{F}_{q^m}^{\ast}} \ | \ \lambda_1, \lambda_2 \in \mathbb{F}_{q^m} \big\rbrace.$$
Then  we have 
\begin{align*}
	\hbox{Hull}(C)&=\hbox{span}\big\{\big(\T_{q^mq}(\beta^{q^r} x^{i_1})\big)_{x \in \mathbb{F}_{q^m}^*}, \big(\T_{q^mq}(\beta^{q^r} x^{i_2})\big)_{x \in \mathbb{F}_{q^m}^*} \ | \  0 \leq r \leq q^{m - 1} \big \}.\\
	&=\hbox{span}\big\{\big(\T_{q^mq}(\beta x^{i_1})\big)_{x \in \mathbb{F}_{q^m}^*}, \big(\T_{q^mq}(\beta x^{i_2})\big)_{x \in \mathbb{F}_{q^m}^*} \big \},
	\end{align*}
where the last equality follows from the assumption that $|B_{i_1}|=|B_{i_2}|=1$. 
We also have  ${h}(C)=1$ by assumption, which implies  that all the vectors in the  spanning set of $\hbox{Hull}(C)$  are linearly dependent. Using Lemma \ref{linindep},  we obtain $B_{i_1}=B_{i_2}$. Hence, the result follows. 
\begin{corollary}\label{cor:bintern}
	There exist no binary and ternary one-dimensional hull cyclic codes of length $q^m-1$.
\end{corollary}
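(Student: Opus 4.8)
The plan is to read off the corollary directly from Theorem~\ref{ouronedimhull}. That theorem tells us that $h(C)=1$ forces its condition i): there must exist a coset leader $i_j$ modulo $n=q^m-1$ with $|B_j|=1$ and $B_j\cap B_{-j}=\emptyset$. So it suffices to show that when $q\in\{2,3\}$ there is \emph{no} such coset leader; it then follows that no cyclic code of length $q^m-1$ over $\F_2$ or $\F_3$ can have one-dimensional hull.

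First I would identify all singleton $q$-cyclotomic cosets modulo $n=q^m-1$. By definition, $|B_a|=1$ is equivalent to $aq\equiv a\pmod n$, i.e. $a(q-1)\equiv 0\pmod{q^m-1}$. For $q=2$ this reads $a\equiv 0\pmod{2^m-1}$, so the only singleton coset is $B_0=\{0\}$. For $q=3$ it reads $2a\equiv 0\pmod{3^m-1}$; since $3^m-1$ is even, writing $3^m-1=2k$ with $k=(3^m-1)/2$, this is equivalent to $a\equiv 0\pmod k$, so the only singleton cosets are $B_0=\{0\}$ and $B_k=\{(3^m-1)/2\}$.

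Next I would verify that every one of these singleton cosets $B_a$ is fixed by negation, i.e. $B_a=B_{-a}$. For $a=0$ this is trivial since $-0\equiv 0\pmod n$. For $a=(3^m-1)/2$ we have $-a\equiv n-a=(3^m-1)-(3^m-1)/2=(3^m-1)/2=a\pmod n$, so again $B_a=B_{-a}$. Hence in both the binary and the ternary case, every singleton $q$-cyclotomic coset $B_j$ modulo $q^m-1$ satisfies $B_j=B_{-j}$, and therefore $B_j\cap B_{-j}=B_j\neq\emptyset$. Consequently condition i) of Theorem~\ref{ouronedimhull} can never hold, so no cyclic code of length $q^m-1$ over $\F_2$ or $\F_3$ has one-dimensional hull.

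The argument is short and I do not anticipate a genuine obstacle; the only points requiring care are the parity observation for $q=3$ (that $3^m-1$ is even, which is precisely what produces the extra candidate $a=(3^m-1)/2$) and the remark that the degenerate leader $a=0$ is automatically excluded because its coset equals its own negative.
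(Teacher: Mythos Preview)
Your proof is correct and follows essentially the same approach as the paper: both invoke Theorem~\ref{ouronedimhull} and show that condition i) cannot be met when $q\in\{2,3\}$ by determining all singleton $q$-cyclotomic cosets modulo $q^m-1$ and checking that each is fixed under negation. Your presentation is slightly more systematic in that you first enumerate all singleton cosets and then verify $B_a=B_{-a}$ for each, whereas the paper argues by contradiction from an assumed $i_j$; the substance is the same.
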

\begin{proof}
	If $C$ is a one-dimensional hull cyclic code over $\F_q$,  then by Theorem \ref{ouronedimhull},  $\hbox{BZ}(C^{\perp})=T_1 \cup \lbrace \alpha^{i_j} \rbrace$, where $B_j\cap B_{-j}=\emptyset$,  $|B_j|=1$ and  $\alpha^{-i_j}\notin \hbox{BZ}(C^{\perp})$. This implies that  $i_j\equiv 2i_j$ mod ($2^m-1$), when $q=2$.  Thus $(2^m-1)\mid i_j$, i.e, $i_j=2^m - 1$, a contradiction to the assumption that $i_j<2^m - 1$.

	Similarly, we have $i_j\equiv 3i_j$ mod ($3^m-1$), when $q \equiv 3$.  Thus $(3^m-1)\mid 2i_j$, i.e., $\alpha^{2i_j}=1$. Since $i_j < q^m - 1$, we conclude that $\alpha^{i_j}=-1$, which implies that $\alpha^{-i_j}=-1$. Hence, we have $i_j \equiv -i_j \mod (3^m - 1)$, which contradicts the assumption that $B_j\cap B_{-j}=\emptyset$. 
	\end{proof}
\begin{remark}
		The characterization of one-dimensional hull cyclic codes in terms of their defining sets is given in \cite{onedimhull}, whereas our characterization   is given in terms of  basic dual zero sets  of  cyclic codes.  In \cite{onedimhull}, the authors  also obtain   the non-existence result given in Corollary \ref{cor:bintern} as a consequence of their characterization.  
	\end{remark}

Let $C$ be a cyclic code of length $n$, where $\gcd(n, q)=1$. We keep the notation  of Theorem \ref{ouronedimhull}.  Suppose that  for any $\alpha^{i_j} \in T_2$,  $B_j\cap B_{-j}=\emptyset$ and $\alpha^{-i_j}\notin T_2$.  Then similar to the proof of  Theorem \ref{ouronedimhull}, we can write 
	$$\hbox{BZ}(((\hbox{Hull}(C))^{\perp})=T_2.$$
	This means that 
	$$h_{\hbox{Hull}(C)}^{\ast}(x)=\prod_{\alpha^{i_j}\in T_2} m_{\alpha^{i_j}}(x),$$
	where $h_{\hbox{Hull(C)}}(x)$ is the parity check polynomial of the code  $\hbox{Hull}(C)$. As a result, we arrive at the following theorem, which generalizes Theorem \ref{ouronedimhull} to the hull of cyclic codes of arbitrary length. 
\begin{theorem}
	Let $C=<g(x)>$ be a cyclic code of length $n$ over $\F_q$. Let $\lbrace{i_1},\ldots , {i_t} \rbrace$ be the set of all leaders of $q$-cyclotomic cosets modulo $n$, and $T=\lbrace \alpha^{i_j} \; | \; 1\leq j \leq t\rbrace$. Suppose that  $\hbox{BZ}(C^{\perp})=T_1 \cup T_2$,  and  the following holds. 
	\begin{itemize}
		\item[i)] For any $\alpha^{i_j}\in T_1$, either $B_j=B_{-j}$ or $\alpha^{-i_j}\in T_1$.
		\item[ii)] For any $\alpha^{i_j} \in T_2$,  $B_j\cap B_{-j}=\emptyset$ and $\alpha^{-i_j}\notin T_2$.
	\end{itemize}
	Then $\hbox{BZ}((\hbox{Hull}(C)^{\perp})=T_2$ and
	$h(C)=\sum_{\alpha^{i_j}\in T_2} |B_j|.$
\end{theorem}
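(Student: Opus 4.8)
The plan is to pin down the generator polynomial of $\hbox{Hull}(C)$ explicitly and then read off both assertions. Write $C=\langle g(x)\rangle$ and $C^{\perp}=\langle h^{\ast}(x)\rangle$ with $h(x)=(x^n-1)/g(x)$; since $\gcd(n,q)=1$, the polynomial $x^n-1$ is squarefree, so $\gcd(g(x),h(x))=1$. Put $t_1(x)=\prod_{\alpha^{i_j}\in T_1}m_{\alpha^{i_j}}(x)$ and $t_2(x)=\prod_{\alpha^{i_j}\in T_2}m_{\alpha^{i_j}}(x)$, where we may and do assume $T_1\cap T_2=\emptyset$, so that $h^{\ast}(x)=t_1(x)t_2(x)$ by the definition of $\hbox{BZ}(C^{\perp})$. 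The first step is to argue, exactly as in the proof of Theorem \ref{LCD}, that condition (i) makes the set of irreducible factors of $t_1(x)$ closed under reciprocation (each factor $m_{\alpha^{i_j}}(x)$ is either self-reciprocal by Lemma \ref{SELF}, or its reciprocal $m_{\alpha^{-i_j}}(x)$ again divides $t_1(x)$ because $\alpha^{-i_j}\in T_1$); hence $t_1(x)$ is self-reciprocal and $h(x)=h^{\ast\ast}(x)=t_1(x)t_2^{\ast}(x)$.

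Next I would do the divisibility bookkeeping, phrased in terms of $q$-cyclotomic cosets rather than the chosen leaders, since $\alpha^{-i_j}$ need not itself be a coset leader. The key claim is: for every $\alpha^{i_j}\in T_2$, the minimal polynomial $m_{\alpha^{-i_j}}(x)$ divides neither $t_1(x)$ nor $t_2(x)$. That it does not divide $t_2(x)$ is condition (ii); that it does not divide $t_1(x)$ follows by combining (i), (ii) and $T_1\cap T_2=\emptyset$ --- if it divided $t_1(x)$, then applying (i) to that factor gives either $B_j=B_{-j}$, contradicting (ii), or $\alpha^{i_j}\in T_1$, contradicting disjointness. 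Hence $t_2^{\ast}(x)=\prod_{\alpha^{i_j}\in T_2}m_{\alpha^{-i_j}}(x)$ is coprime to $h^{\ast}(x)=t_1(x)t_2(x)$. Since $\gcd(g(x),h(x))=1$ with $h(x)=t_1(x)t_2^{\ast}(x)$, both $t_1(x)$ and $t_2^{\ast}(x)$ are coprime to $g(x)$; moreover each $m_{\alpha^{i_j}}(x)$ with $\alpha^{i_j}\in T_2$ divides $x^n-1$ but, by disjointness and the claim, divides neither $t_1(x)$ nor $t_2^{\ast}(x)$, hence divides $h(x)$ not at all and therefore divides $g(x)$. It follows that $\gcd(g(x),h^{\ast}(x))=\gcd(g(x),t_1(x)t_2(x))=t_2(x)$.

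Finally, Theorem \ref{genint} gives $\hbox{Hull}(C)=C\cap C^{\perp}=\langle\hbox{lcm}(g(x),h^{\ast}(x))\rangle$, and
$$\hbox{lcm}(g(x),h^{\ast}(x))=\frac{g(x)h^{\ast}(x)}{\gcd(g(x),h^{\ast}(x))}=\frac{(x^n-1)\,t_1(x)t_2(x)}{t_1(x)t_2^{\ast}(x)\,t_2(x)}=\frac{x^n-1}{t_2^{\ast}(x)},$$
so the parity-check polynomial of $\hbox{Hull}(C)$ is $t_2^{\ast}(x)$, its reciprocal is $t_2(x)=\prod_{\alpha^{i_j}\in T_2}m_{\alpha^{i_j}}(x)$, and therefore $\hbox{BZ}((\hbox{Hull}(C))^{\perp})=T_2$ by the definition of the basic dual zero. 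Taking degrees, $h(C)=\dim\hbox{Hull}(C)=n-\deg\bigl((x^n-1)/t_2^{\ast}(x)\bigr)=\deg t_2^{\ast}(x)=\deg t_2(x)=\sum_{\alpha^{i_j}\in T_2}|B_j|$. I expect the only genuine obstacle to be the middle paragraph: one must keep track that $\alpha^{-i_j}$ is generally not a coset leader, so the "does $m_{\alpha^{-i_j}}$ divide $t_1$?" arguments must go through $q$-cyclotomic cosets and the uniqueness of the leader in each coset, and one really needs all three hypotheses (i), (ii) and $T_1\cap T_2=\emptyset$ acting together to rule out the bad overlap between $t_2^{\ast}$ and $t_1$.
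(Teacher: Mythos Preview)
Your argument is correct and follows essentially the same route as the paper: the paper does not give a standalone proof of this theorem but derives it from the computation in Theorem~\ref{ouronedimhull}, where one uses Theorem~\ref{genint} to write $\hbox{Hull}(C)=\langle\hbox{lcm}(g(x),h^{\ast}(x))\rangle$, observes via condition~(i) that $t_1(x)$ is self-reciprocal so that $h(x)=t_1(x)t_2^{\ast}(x)$, deduces $\gcd(g(x),h^{\ast}(x))=t_2(x)$, and hence $\hbox{Hull}(C)=\langle (x^n-1)/t_2^{\ast}(x)\rangle$. Your write-up is in fact more careful than the paper on the point you flag---that $\alpha^{-i_j}$ need not be a coset leader and that one must use all of (i), (ii) and $T_1\cap T_2=\emptyset$ to see $t_2^{\ast}(x)$ is coprime to $h^{\ast}(x)$---whereas the paper passes over this with ``and hence $\gcd(g(x),h^{\ast}(x))=m_{\alpha^{i_j}}(x)$''.
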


\section{Linear complementary pair of cyclic codes}\label{LCP}
A pair $(C,D)$ of linear codes of length $n$ over the  finite field $\F_q$ is called linear complementary pair (LCP) of codes if $\F_q^n=C \oplus D$.  The LCP of codes can be considered as a generalization of LCD codes. Namely, if $C$ is an LCD code, then the pair $(C,C^{\perp})$ is LCP.

The following lemma is required  to obtain the main result of this section. 
\begin{lemma}\label{BZINT}
	Let $C$ and $D$ be two cyclic codes of length $n$ over $\F_q$. Then
	$$\hbox{BZ}((C\cap D)^{\perp})=\hbox{BZ}(C^{\perp})\cap \hbox{BZ}(D^{\perp}).$$
\end{lemma}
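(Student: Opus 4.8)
The plan is to reduce the statement to a divisibility condition on generator polynomials and then exploit the fact that reciprocation commutes with $\gcd$ on divisors of $x^n-1$. First I would unwind the definition: if $C=\langle g_C(x)\rangle$ with parity check polynomial $h_C(x)=(x^n-1)/g_C(x)$, then $C^{\perp}=\langle h_C^{\ast}(x)\rangle$, and by definition $\hbox{BZ}(C^{\perp})=\{\alpha^{i_j}\mid m_{\alpha^{i_j}}(x)\mid h_C^{\ast}(x)\}$, and similarly with $D$. So the claimed identity is equivalent to saying that $m_{\alpha^{i_j}}(x)$ divides the reciprocal of the parity check polynomial of $C\cap D$ precisely when it divides both $h_C^{\ast}(x)$ and $h_D^{\ast}(x)$.

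Next I would compute the parity check polynomial of $C\cap D$. By Theorem \ref{genint}, $C\cap D=\langle\hbox{lcm}(g_C(x),g_D(x))\rangle$, so its parity check polynomial is $h(x)=(x^n-1)/\hbox{lcm}(g_C(x),g_D(x))$. Since $\gcd(n,q)=1$, the polynomial $x^n-1$ is squarefree, hence $g_C,g_D,h_C,h_D$ are all products of distinct monic irreducible factors of $x^n-1$ and the divisor lattice is Boolean; in particular $\hbox{lcm}(g_C,g_D)\cdot\gcd(h_C,h_D)=x^n-1$, so $h(x)=\gcd(h_C(x),h_D(x))$ and $(C\cap D)^{\perp}=\langle\gcd(h_C(x),h_D(x))^{\ast}\rangle$.

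The key step, and the one needing the most care, is the identity $\gcd(h_C,h_D)^{\ast}=\gcd(h_C^{\ast},h_D^{\ast})$. I would prove it by observing that for any $q$-cyclotomic coset leader $i$ one has $m_{\alpha^{i}}^{\ast}(x)=m_{\alpha^{-i}}(x)$, since the roots of the reciprocal polynomial are the inverses of the roots; thus reciprocation permutes the monic irreducible factors of $x^n-1$ through the involution $\alpha^{i}\mapsto\alpha^{-i}$. Writing $h_C=\prod_{j\in S_C}m_{\alpha^{i_j}}$ and $h_D=\prod_{j\in S_D}m_{\alpha^{i_j}}$, both $\gcd(h_C,h_D)^{\ast}$ and $\gcd(h_C^{\ast},h_D^{\ast})$ equal $\prod_{j\in S_C\cap S_D}m_{\alpha^{-i_j}}$, which gives the identity. (Alternatively one could argue via $(C\cap D)^{\perp}=C^{\perp}+D^{\perp}$ together with the dual of Theorem \ref{genint}, namely that the generator polynomial of a sum of cyclic codes is the $\gcd$ of the generator polynomials; but the reciprocal computation keeps everything within the tools already set up in the excerpt.)

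Finally I would assemble the pieces: $m_{\alpha^{i_j}}(x)\mid h(x)^{\ast}$ iff $m_{\alpha^{i_j}}(x)\mid\gcd(h_C^{\ast}(x),h_D^{\ast}(x))$, and since $m_{\alpha^{i_j}}(x)$ is irreducible this happens iff $m_{\alpha^{i_j}}(x)\mid h_C^{\ast}(x)$ and $m_{\alpha^{i_j}}(x)\mid h_D^{\ast}(x)$, i.e. iff $\alpha^{i_j}\in\hbox{BZ}(C^{\perp})\cap\hbox{BZ}(D^{\perp})$; this yields $\hbox{BZ}((C\cap D)^{\perp})=\hbox{BZ}(C^{\perp})\cap\hbox{BZ}(D^{\perp})$. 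I do not expect a serious obstacle beyond the bookkeeping with the reciprocal map; the only subtlety worth a remark is that although $\hbox{BZ}$ was introduced in the excerpt for length $n=q^m-1$, the definition only relies on the factorization of $x^n-1$ into minimal polynomials of $q$-cyclotomic cosets, which is available for every $n$ coprime to $q$, so the argument is valid at the generality claimed in the lemma.
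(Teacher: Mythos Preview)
Your argument is correct and in fact cleaner than the paper's. You proceed structurally: invoke Theorem~\ref{genint} to get $C\cap D=\langle\hbox{lcm}(g_C,g_D)\rangle$, use squarefreeness of $x^n-1$ to rewrite the parity check polynomial of $C\cap D$ as $\gcd(h_C,h_D)$, observe that reciprocation is a permutation of the monic irreducible factors of $x^n-1$ and hence commutes with $\gcd$, and read off the result on the level of irreducible divisors. The paper instead argues the two inclusions separately by picking a nonzero codeword $c\in C\cap D$ and manipulating the parity-check identities $c(x)h(x)=c(x)h_C(x)=c(x)h_D(x)=0$ in $\F_q[x]/(x^n-1)$, together with a contradiction argument involving division with remainder by $m_{\alpha^{i_j}}(x)$; this is more hands-on but also more delicate (for instance it tacitly assumes $C\cap D\neq\{0\}$, and the passage from the codeword identities to the divisibility conclusions requires care). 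Your route has the advantage of being lattice-theoretic and entirely independent of any choice of codeword, and it transparently reuses Theorem~\ref{genint}, which the paper already states for other purposes; the paper's route stays closer to the codeword level but at the cost of some opacity. Your closing remark on the scope of the $\hbox{BZ}$ definition for general $n$ with $\gcd(n,q)=1$ is also apt.
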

\begin{proof}
	Let $g_C(x)$, $g_D(x)$ and $g(x)$ denote the generator polynomials of $C$, $D$ and $C\cap D$,  and $h_C(x)$, $h_D(x)$ and $h(x)$ denote their parity check polynomials, respectively. 	Take $\alpha^{i_j}\in \hbox{BZ}((C\cap D)^{\perp})$. Suppose that $0 \neq c\in C \cap D$, and $c(x)$ is the polynomial  corresponding to the codeword $c$. Since $c \in C \cap D$, we have 
	$c(x)h(x)=c(x)h_C(x)=c(x)h_D(x)=0.$ This implies that 	
	$c^{\ast}(x)h^{\ast}(x)=c^{\ast}(x)h^{\ast}_C(x)=c^{\ast}(x)h^{\ast}_D(x)=0$. 
	As $\alpha^{i_j} \in \hbox{BZ}((C\cap D)^{\perp})$, there exists $ q(x) \in \F_q[x]$  such that  $h^{\ast}(x)=m_{\alpha^{i_j}}(x) q(x)$.    Then 
	\begin{align*}
			c^{\ast}(x)m_{\alpha^{i_j}}(x) q(x)=c^{\ast}(x)h^{\ast}_C(x)=c^{\ast}(x)h^{\ast}_D(x)=0.
		\end{align*}
	Since $c(x) \neq 0$, we obtain that  $m_{\alpha^{i_j}}(x) \ | \ h^{\ast}_C(x)$,  and $m_{\alpha^{i_j}}(x) \ | \ h^{\ast}_D(x)$. Hence,  
	$\alpha^{i_j}\in \hbox{BZ}(C^{\perp})\cap \hbox{BZ}(D^{\perp})$, i.e., $\alpha^{i_j}\in \hbox{BZ}((C\cap D)^{\perp}) \subseteq  \hbox{BZ}(C^{\perp})\cap \hbox{BZ}(D^{\perp})$.

We prove the reverse inclusion by contradiction.  Suppose that $\alpha^{i_j}\in \hbox{BZ}(C^{\perp})\cap \hbox{BZ}(D^{\perp})$ and $\alpha^{i_j} \notin \hbox{BZ}((C\cap D)^{\perp})$. Then $m_{\alpha^{i_j}}(x) \nmid  h^*(x)$, i.e.,  there exist $q(x), r(x) \in \F_{q}[x]$ such that  $h^{\ast}(x)=m_{\alpha^{i_j}}(x) q(x)+r(x)$  with $0 \neq r(x)$ and $\deg r(x)< \deg m_{\alpha^{i_j}}(x)$. Take $0 \neq c\in C\cap D$, and consider the corresponding polynomial $c(x)$.  Since $c \in C \cap D$, we have  $c(x)h^{\ast}(x)=ch^{\ast}_C(x)=0$. As  $\alpha^{i_j} \in  \hbox{BZ}(C^{\perp})$,  the polynomial $m_{\alpha^{i_j}}(x)$ divides  $h_C^*(x)$. That is, there exists $q_1(x) \in \F_q[x]$ such that  $h_C^{\ast}(x)=m_{\alpha^{i_j}}(x) q_1(x)$. Then
	$$0=c(x)m_{\alpha^{i_j}}(x) q_1(x)=c(x)m_{\alpha^{i_j}}(x) q(x)+c(x)r(x).$$
	Thus,
	$$c(x)r(x)=c(x)m_{\alpha^{i_j}}(x)(q_1(x)-q(x)),$$
	and as $c(x) \neq 0$, $$ r(x)=m_{\alpha^{i_j}}(x)(q(x)-q_1(x)).$$
	This implies that $m_{\alpha^{i_j}}(x)\mid r(x)$, which contradicts $\deg r(x)<m_{\alpha^{i_j}}(x)$. Therefore $ \alpha^{i_j}\in \hbox{BZ}((C\cap D)^{\perp})$, i.e., $   \hbox{BZ}(C^{\perp})\cap \hbox{BZ}(D^{\perp}) \subseteq   \hbox{BZ}((C\cap D)^{\perp})  $. 
	\end{proof}
The following theorem characterizes the LCP of cyclic codes $(C, D)$ of length $q^m-1$ over $\F_q$ in terms of the basic dual zeros of $C$ and $D$.
\begin{theorem}\label{lcpofcycliccodes}
	Let  $n=q^m - 1$, $C$ and $D$ be cyclic codes  of length $n$.  Let $\lbrace{i_1},\ldots , {i_t} \rbrace$ be the set of all leaders of $q$-cyclotomic cosets modulo $n$, and $T=\lbrace \alpha^{i_j} \; | \; 1\leq j \leq t\rbrace$.  Then the pair $(C,D)$ is an LCP of codes if and only if $\hbox{BZ}(C^{\perp})=T\setminus \hbox{BZ}(D^{\perp})$.
\end{theorem}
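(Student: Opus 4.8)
The idea is to split the requirement $C\oplus D=\F_q^n$ into its two constituents, $C\cap D=\{0\}$ and $C+D=\F_q^n$, and to recast each one as a set relation between $\hbox{BZ}(C^{\perp})$ and $\hbox{BZ}(D^{\perp})$ inside $T$. The bridge between codes and basic dual zeros is the elementary dimension identity: if $h_E(x)$ denotes the parity check polynomial of a cyclic code $E$ of length $n$, then $\dim E=\deg h_E(x)$; from $g_E(x)h_E(x)=x^n-1$ we get $g_E(0)h_E(0)=-1\neq 0$, so $h_E(0)\neq 0$ and thus $\deg h_E^{\ast}(x)=\deg h_E(x)$; and by the definition of the basic dual zero, $h_E^{\ast}(x)=\prod_{\alpha^{i_j}\in \hbox{BZ}(E^{\perp})}m_{\alpha^{i_j}}(x)$ with $\deg m_{\alpha^{i_j}}(x)=|B_j|$. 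Hence
\[
\dim E=\sum_{\alpha^{i_j}\in \hbox{BZ}(E^{\perp})}|B_j|,
\]
which I would apply with $E=C$, $E=D$ and $E=C\cap D$.

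First I would handle the intersection. Substituting $E=C\cap D$ into the displayed identity and invoking Lemma \ref{BZINT} gives $\dim(C\cap D)=\sum_{\alpha^{i_j}\in\hbox{BZ}(C^{\perp})\cap\hbox{BZ}(D^{\perp})}|B_j|$; since each cyclotomic coset is nonempty, this sum is $0$ if and only if $\hbox{BZ}(C^{\perp})\cap\hbox{BZ}(D^{\perp})=\emptyset$. So $C\cap D=\{0\}$ is equivalent to the two basic dual zero sets being disjoint. Next I would count the sum: from $\dim(C+D)=\dim C+\dim D-\dim(C\cap D)$, the displayed identity for all three codes, and inclusion--exclusion on finite sets, one obtains $\dim(C+D)=\sum_{\alpha^{i_j}\in\hbox{BZ}(C^{\perp})\cup\hbox{BZ}(D^{\perp})}|B_j|$. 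Because the $q$-cyclotomic cosets modulo $n$ partition $\Z_n$ we have $\sum_{j=1}^{t}|B_j|=n$; therefore $\dim(C+D)\le n$, with equality---equivalently $C+D=\F_q^n$---precisely when $\hbox{BZ}(C^{\perp})\cup\hbox{BZ}(D^{\perp})=T$.

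Combining the two steps completes the argument: $(C,D)$ is an LCP of codes iff $C\cap D=\{0\}$ and $C+D=\F_q^n$, which by the two equivalences above holds iff $\hbox{BZ}(C^{\perp})$ and $\hbox{BZ}(D^{\perp})$ are disjoint and jointly cover $T$, i.e.\ iff they partition $T$, i.e.\ iff $\hbox{BZ}(C^{\perp})=T\setminus\hbox{BZ}(D^{\perp})$.

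I do not anticipate a real obstacle. The single nontrivial ingredient is Lemma \ref{BZINT}, which is already available; everything else is routine. The two points to keep an eye on are that $\deg h_E^{\ast}=\deg h_E$ uses $h_E(0)\neq 0$ (automatic, since $x^n-1$ has nonzero constant term), and that the reduction of "$C\oplus D=\F_q^n$" to the pair "$C\cap D=\{0\}$, $C+D=\F_q^n$" is nothing more than the definition of an internal direct-sum decomposition of $\F_q^n$.
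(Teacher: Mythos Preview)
Your proposal is correct, and it takes a genuinely different route from the paper's own proof. The paper argues both directions via the trace representation of Proposition~\ref{tracerepr} and Theorem~\ref{ZEROTRACE}: for the forward direction it assumes a common element $\alpha^{i_j}\in\hbox{BZ}(C^{\perp})\cap\hbox{BZ}(D^{\perp})$ and manufactures a nonzero $c=\bigl(\T_{q^mq}(\lambda x^{i_j})\bigr)_{x\in\F_{q^m}^{\ast}}\in C\cap D$; for the converse it assumes a nonzero $c\in C\cap D$, writes it twice via the two trace expansions, subtracts, and invokes Theorem~\ref{ZEROTRACE} to reach a contradiction from the coset containing $1$. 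In particular the paper does not invoke Lemma~\ref{BZINT} here at all. Your argument, by contrast, bypasses the trace machinery entirely: you reduce everything to the dimension formula $\dim E=\sum_{\alpha^{i_j}\in\hbox{BZ}(E^{\perp})}|B_j|$, feed in Lemma~\ref{BZINT} for $E=C\cap D$, and finish with inclusion--exclusion. What your approach buys is that it is more elementary, it handles both conditions (trivial intersection and full sum) symmetrically, and it never uses $n=q^m-1$---your argument works verbatim for any $n$ with $\gcd(n,q)=1$. What the paper's approach buys is thematic consistency with Sections~\ref{LCD1}--\ref{ONEHULL}, where the trace viewpoint is the organizing principle; the cost is that it really needs $n=q^m-1$ for Proposition~\ref{tracerepr} and Theorem~\ref{ZEROTRACE} to apply, and the ``covering'' half $\hbox{BZ}(C^{\perp})\cup\hbox{BZ}(D^{\perp})=T$ (equivalently $\dim C+\dim D=n$) is left implicit rather than spelled out.
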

\begin{proof}
	Assume that the pair  $(C,D)$ is an LCP of codes, and there exists $\alpha^{i_j}\in \hbox{BZ}(C^{\perp})$  and $\alpha^{i_j}\ \notin T\setminus \hbox{BZ}(D^{\perp})$, i.e.,  $\alpha^{i_j}\in \hbox{BZ}(C^{\perp})\cap \hbox{BZ}(D^{\perp})$. Let $\hbox{BZ}(C^{\perp})=\lbrace \alpha^{i_j}\rbrace \cup T_1$ and $\hbox{BZ}(D^{\perp})=\lbrace \alpha^{i_j} \rbrace\cup T_2$. Similar to the proof of  Theorem \ref{LCD}, we can see that there exists $\lambda \in \F_{q^m}$  such that $\T_{q^mq}(\lambda)\ne 0$ and 
	$0\ne c=\big(\T_{q^mq}(\lambda x^{i_j})\big)_{x\in \F_{q^m}^{\ast}} \in C \cap D.$ This contradicts the assumption that the pair $(C,D)$ is LCP.
	
	 Conversely, suppose on the contrary that $\hbox{BZ}(C^{\perp})=T\setminus \hbox{BZ}(D^{\perp})$ and the pair $(C, D)$ is not LCP.  As $\hbox{BZ}(C^{\perp})=T\setminus \hbox{BZ}(D^{\perp})$, we without loss of generality assume that $\hbox{BZ}(C^{\perp})=\lbrace \alpha^{i_1},\ldots,\alpha^{i_s}\rbrace$ and $\hbox{BZ}(D^{\perp})=\lbrace \alpha^{i_{s+1}},\ldots,\alpha^{i_t}\rbrace$. Let $0\ne c\in C\cap D$. Similar to the proof of Theorem \ref{LCD}, there exist $\lambda_1,\ldots, \lambda_t\in \F_{q^m}$ such that 
	$$c=\big(\T_{q^mq}(\lambda_1x^{i_1}+\cdots+\lambda_s x^{i_s})\big)_{x\in \F_{q^m}^{\ast}}=\big(\T_{q^mq}(\lambda_{s+1}x^{i_{s+1}}+\cdots+\lambda_t x^{i_t})\big)_{x\in \F_{q^m}^{\ast}}.$$
	Thus
	$$\T_{q^mq}(\lambda_1x^{i_1}+\cdots+\lambda_s x^{i_s}-\lambda_{s+1}x^{i_{s+1}}-\cdots-\lambda_t x^{i_t})=0 \quad \text{ for all } x\in \F_{q^m}^{\ast}.$$
	By assumption, $\hbox{BZ}(C^{\perp}) \cup \hbox{BZ}(D^{\perp})$ contains all the leaders of $q$-cyclotomic cosets modulo $q^m-1$, in  particular, the coset leader that contains 1, which is of cardinality $m$. Hence, we obtain a  contradiction to  Theorem \ref{ZEROTRACE}. 
\end{proof}
\begin{remark}
Let $(C,D)$ be an LCP of cyclic codes of length $n$. By Theorem \ref{lcpofcycliccodes}, we have $ \hbox{BZ}(C^{\perp})= T \setminus  \hbox{BZ}(D^{\perp})$.  Then we have  
$$x^n-1=\prod_{\alpha^{i_j}\in\rm{BZ}(C^{\perp})}m_{\alpha^{i_j}}(x) \prod_{\alpha^{i_j}\in \rm{BZ}(D^{\perp})}m_{\alpha^{i_j}}(x)=h_C^{\ast}(x) h_D^{\ast}(x),$$
by Equation \eqref{rewritten}.  That is, 
$$x^n-1=g_C(x)g_D(x).$$
This implies that $g_C(x)$ and $g_D(x)$ are relatively prime, which has been also observed in \cite[Remark 2.3]{LCP}.
\end{remark}
A pair $(C,D)$ of linear codes is called linear $\ell$-intersection pair of codes if $\dim(C\cap D)=\ell$. Note that if $(C, D)$ is an LCP of codes of lengh $n$,  then $(C, D)$ is a  0-intersection pair of codes with $n=\dim(C)+\dim(D)$. We then have the following theorem, which generalizes Theorem \ref{lcpofcycliccodes} to any linear $\ell$-intersection pair of cyclic codes with arbitrary length $n$. 
\begin{theorem}
	Let $C$ and $D$ be cyclic codes of length $n$ over $\F_q$. Then 
	$$\ell=\dim(C\cap D)= \sum_{\alpha^{i_j}\in T_1} |B_j|,$$
	where  $T_1\subseteq T$ such that $\hbox{BZ}(C^{\perp})\cap \hbox{BZ}(D^{\perp})= T_1$. 
\end{theorem}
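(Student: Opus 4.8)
The plan is to reduce the dimension of $C\cap D$ to a statement about the parity check polynomial of $C\cap D$ and then apply Lemma \ref{BZINT}. First I would recall the two standard facts that will do all the work: for a cyclic code $E$ of length $n$ with parity check polynomial $h_E(x)$, one has $\dim(E)=\deg h_E(x)$, and $E^{\perp}=\langle h_E^{\ast}(x)\rangle$, so $\dim(E)=\deg h_E^{\ast}(x)$; and by the definition of basic dual zero together with the factorization \eqref{rewritten}, $h_E^{\ast}(x)=\prod_{\alpha^{i_j}\in\mathrm{BZ}(E^{\perp})} m_{\alpha^{i_j}}(x)$, whence $\dim(E)=\deg h_E^{\ast}(x)=\sum_{\alpha^{i_j}\in\mathrm{BZ}(E^{\perp})}|B_j|$, since $\deg m_{\alpha^{i_j}}(x)=|B_j|$.

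Next I would apply this with $E=C\cap D$, which is again a cyclic code (Theorem \ref{genint}). This gives
$$\dim(C\cap D)=\sum_{\alpha^{i_j}\in\mathrm{BZ}((C\cap D)^{\perp})}|B_j|.$$
Now Lemma \ref{BZINT} identifies $\mathrm{BZ}((C\cap D)^{\perp})=\mathrm{BZ}(C^{\perp})\cap\mathrm{BZ}(D^{\perp})$, which is exactly the set $T_1$ in the statement. Substituting yields $\dim(C\cap D)=\sum_{\alpha^{i_j}\in T_1}|B_j|$, which is the claim. The only points needing a line of care are that $\mathrm{BZ}((C\cap D)^{\perp})\subseteq T$ consists of genuine coset leaders (so that the indexing by $j$ and the quantity $|B_j|$ make sense), and that no repetition occurs in the product $\prod m_{\alpha^{i_j}}(x)$ — both follow because the cyclotomic cosets partition $\mathbb{Z}_n$ and the $m_{\alpha^{i_j}}(x)$ for distinct leaders $i_j$ are distinct irreducible factors of $x^n-1$ (using $\gcd(n,q)=1$, so $x^n-1$ is squarefree).

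I do not expect a genuine obstacle here: the theorem is essentially a bookkeeping corollary of Lemma \ref{BZINT} plus the degree-of-parity-check-polynomial formula for the dimension. If anything is delicate, it is purely presentational — making sure the reader sees that $\dim(E)=\deg h_E^{\ast}(x)$ because $h_E^{\ast}(x)$ is the generator polynomial of $E^{\perp}$ and $\dim E^{\perp}=n-\dim E=n-\deg h_E(x)=\deg h_E^{\ast}(x)$ is consistent — but this is immediate once one notes $\deg h_E^{\ast}(x)=\deg h_E(x)$. I would therefore write the proof in three short steps: (1) express $\dim(C\cap D)$ via the basic dual zero set of $C\cap D$ using the degree formula and \eqref{rewritten}; (2) invoke Lemma \ref{BZINT} to rewrite $\mathrm{BZ}((C\cap D)^{\perp})$ as $\mathrm{BZ}(C^{\perp})\cap\mathrm{BZ}(D^{\perp})=T_1$; (3) conclude. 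As a sanity check, when $T_1=\emptyset$ this recovers $\dim(C\cap D)=0$, consistent with the LCP characterization of Theorem \ref{lcpofcycliccodes} in the length $q^m-1$ case.
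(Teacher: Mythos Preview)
Your proposal is correct and follows essentially the same route as the paper: invoke Lemma~\ref{BZINT} to identify $\mathrm{BZ}((C\cap D)^{\perp})$ with $T_1$, then read off $\dim(C\cap D)$ as the degree of the parity check polynomial, i.e., $\sum_{\alpha^{i_j}\in T_1}|B_j|$. The only cosmetic slip is that the relevant factorization to cite is~\eqref{factor} rather than~\eqref{rewritten}.
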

\begin{proof}
	Let $ \hbox{BZ}(C^{\perp})\cap \hbox{BZ}(D^{\perp})=T_1$. By 
Lemma  \ref{BZINT}, we have
$T_1= \hbox{BZ}((C\cap D)^{\perp})$. 
This means that 
$$h^{\ast}(x)=\prod_{\alpha^{i_j}\in T_1} m_{\alpha^{i_j}}(x),$$
where $h(x)$ is the parity check polynomial of $C\cap D$. Therefore, we obtain 
$$\ell=\dim(C\cap D)=\deg h(x)=\sum_{\alpha^{i_j}\in T_1} |B_j|.$$
\end{proof}
\section*{Acknowledgement}
The authors would like to thank Cem G\"{u}neri for pointing out the problem and
helpful discussions, and Nurdag\"{u}l Anbar for her suggestions that improve the quality of the presentation of the paper. T. K. is supported by T\"UB\.ITAK Project under Grant 120F309.

\end{document}